\DeclarePairedDelimiter\floor{\lfloor}{\rfloor}
 \theoremstyle{plain}
 \newtheorem{observation}[theorem]{Observation}
\newcommand{\turnValue}[1]{\ensuremath{{D_\Omega}({#1})}}
\newcommand{\ang}[1]{\ensuremath{{\alpha}({#1})}}
\newcommand{\dir}[1]{\ensuremath{{d}({#1})}}
\newcommand{\dirl}[1]{\ensuremath{{d_\ell}({#1})}}
\newcommand{\dirr}[1]{\ensuremath{{d_r}({#1})}}
\def\obj{P}
\def\maxOmega{{\Omega^*}}
\title{Reconstructing a convex polygon from its $\omega$-cloud\thanks{
E. A. was partially supported by F.R.S.-FNRS, and by SNF  Early PostDoc Mobility project P2TIP2-168563.
P. B., J. C., and S. V. were partially supported by NSERC.}}
\author{
Elena Arseneva\inst{1} \and 
Prosenjit Bose\inst{2} \and
Jean-Lou De Carufel\inst{3} \and 
Sander Verdonschot\inst{2}
}
\authorrunning{Arseneva \and Bose \and De Carufel \and Verdonschot}
\institute{
St. Petersburg State University, 
Saint-Petersburg, Russia
\email{ea.arseneva@gmail.com}
\and
Carleton University, Ottawa, Canada\\
\email{jit@scs.carleton.ca}, \email{sander@cg.scs.carleton.ca}
 \and
University of Ottawa, Ottawa, Canada\\
\email{jdecaruf@uottawa.ca}
}
\begin{document}

\maketitle

\begin{abstract}
An $\omega$-wedge is the closed set of 
points contained between two rays that are emanating from a single point (the apex), and are separated by an angle $\omega < \pi$. Given a convex polygon $P$, we place the $\omega$-wedge such that $P$ is inside the wedge and both rays are tangent to $P$.
The set of apex positions of all such placements of the  $\omega$-wedge is called the \emph{$\omega$-cloud} of $P$. 



 We investigate reconstructing a polygon $P$ from its $\omega$-cloud.  
Previous work on reconstructing $P$ from probes with the $\omega$-wedge
required knowledge of the points of tangency between   
$P$ and the two rays of the $\omega$-wedge in addition to the location of the apex.
Here  we consider the setting where the \emph{maximal} $\omega$-cloud alone is given.  
We give two conditions under which  
it uniquely 
defines $P$: 
(i) when 
 $\omega < \pi$ is fixed/given, 
or (ii) 
when what is known is that $\omega < \pi/2$. 
We show that if neither of these two conditions hold, then $P$ may not be  unique. 
We show that, when the uniqueness conditions hold, the polygon $P$ can be reconstructed 
in $O(n)$ time with $O(1)$ working space in addition to the input, where $n$ is the number of arcs in the input $\omega$-cloud.  
 \end{abstract}

\section{Introduction}\label{sec:intro}

``Geometric probing considers problems of determining a geometric structure or
some aspect of that structure from the results of a mathematical or physical
measuring device, a probe.''~\cite[Page 1]{DBLP:journals/algorithmica/Skiena89}
Many probing tools have been studied in the literature such as finger probes~\cite{DBLP:journals/jal/ColeY87},
hyperplane (or line) probes~\cite{Dobkin:1986:PCP:12130.12174,DBLP:journals/ipl/Li88a},
diameter probes~\cite{DBLP:journals/ijrr/RaoG94},
x-ray probes~\cite{DBLP:journals/siamcomp/EdelsbrunnerS88,Gardner92},
histogram (or parallel x-ray) probes~\cite{MeijerSkiena96},
half-plane probes~\cite{DBLP:journals/jal/Skiena91} and
composite probes~\cite{DBLP:journals/amai/BrucksteinL91,DBLP:journals/ipl/Li88a} to name a few.
For example, diameter probes measure the width of the polygon at a certain direction; such measurements in all directions yield the diameter function of the polygon.  

A geometric probing problem can be considered as a {\em reconstruction} problem. Can one reconstruct an object given a set of probes?
For diameter probes this is not the case: there exists a class of (curved) \emph{orbiform}  shapes such as the celebrated \emph{Reuleaux triangle} that have the same diameter function as the circle.  
What is more surprising is that this is not the case even for polygonal shapes: there are uncountable families of polygons with the same diameter function, where the polygons  need not be regular, nor does their diameter function need to be constant~\cite{DBLP:journals/ijrr/RaoG94}.
In this paper we show, that an alternative probing device called an  \emph{$\omega$-wedge} yields a function that is free from these  drawbacks. 
 
The $\omega$-wedge, first studied by Bose et al.~\cite{DBLP:journals/comgeo/BoseCSS16},  is the closed set of 
points contained between two rays that are emanating from a single point, the \emph{apex} of the wedge. The angle $\omega$ formed by the two rays is such that $0 < \omega < \pi$.
A single probe of a convex $n$-gon $P$
is {\em valid} when 
$P$ is inside the wedge and both rays of the wedge are tangent to 
$P$, see Figure~\ref{fig:w-probe}a.\footnote{In~\cite{DBLP:journals/comgeo/BoseCSS16} probing with an $\omega$-wedge 
is defined for a wider class of  convex objects. Since here we focus on the objects being convex polygons, we restrict our definition accordingly.}
A valid probe returns the coordinates of the apex
and of the two points of contact between the wedge and the polygon. 
Using this tool,
a convex $n$-gon can be reconstructed using between $2n-3$ and $2n+5$ probes, depending on the value of $\omega$
and the number of 
\emph{narrow vertices} (vertices whose internal angle is at most $\omega$) in $P$~\cite{DBLP:journals/comgeo/BoseCSS16}. 
As an $\omega$-wedge rotates around $\obj$,
the locus of positions of the apex of the $\omega$-wedge
describes a curve called an \emph{$\omega$-cloud}, see Figure~\ref{fig:w-probe}c.
The $\omega$-cloud 
finds applications in diverse geometric algorithms~\cite{abellanas2011coverage,ALOUPIS2010115,moslehi2017separating}.

\begin{figure}[ht]
\begin{minipage}{0.32\linewidth}
\centering
\includegraphics{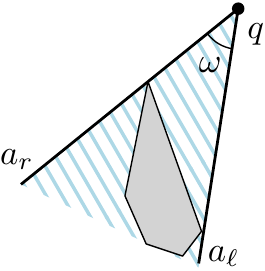} 
\\
(a)
\end{minipage}
\begin{minipage}{0.32\linewidth}
\centering
\includegraphics{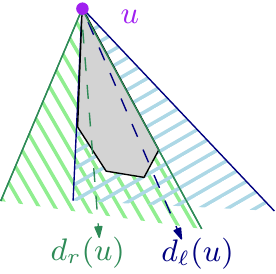}
\\
(b)
\end{minipage}
\begin{minipage}{0.32\linewidth}
\centering
\vspace{-1em}
\includegraphics[page=4,scale = 0.75]{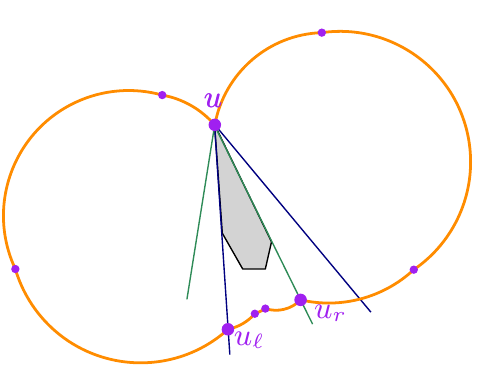}
\\
(c)
\end{minipage}

\caption{A convex polygon $P$ (filled with gray color), and (a) A minimal $\omega$-wedge; (b) Narrow vertex $u$ of $P$, 
wedges $W_\ell(u)$ and $W_r(u)$ (shaded with  
rising  and 
falling tiling pattern, respectively) and their directions $\dirl{u}$ and $\dirr{r}$ (dashed lines); 
(c) The $\omega$-cloud $\Omega$ of $P$: the arcs (bold curved lines), pivots (filled disk marks), and all the supporting circles (light thin curved lines).}

\label{fig:w-probe}
\end{figure}

The $\omega$-cloud can be seen as a generalization of
the diameter function introduced by
Rao and Goldberg~\cite{DBLP:journals/ijrr/RaoG94}.
In their paper,
a diameter probe consists of
two parallel calipers turning around a convex object $\obj$ in the
plane. The function that returns the distance between these calipers as they
turn around $\obj$ is called a \emph{diameter function}.
Rao and Goldberg~\cite{DBLP:journals/ijrr/RaoG94}
show that two different
convex polygons can have the same diameter function, and the function need not be a constant. 
 This implies that
recovering the exact shape and orientation of a convex $n$-gon given only its
diameter function is not always possible and additional information is required.
An $\omega$-wedge can be seen as
two non-parallel calipers turning around a convex object $\obj$ in the plane.
As we prove in this paper, two different convex
 polygons cannot have the same (non-constant) $\omega$-cloud function, as opposed to the case of Rao and Goldberg's tool. 
This clearly shows the advantage of  the $\omega$-cloud against the latter tool.

Related to our probing method is also the method of Fleischer and Wang~\cite{DBLP:conf/isaac/FleischerW09}.  
In their method, a convex $n$-gon $\obj$ is placed inside a circle of
radius $1$. A camera that sees only the silhouette of $\obj$ can be placed
anywhere on the circle. The angle with which the camera sees $\obj$ together
with the position of the camera is a probe of $\obj$.  Let $\alpha$ be the
largest angle of $\obj$. They prove that if no two camera tangents overlap,
then $\left\lceil \frac{3\pi}{\pi-\alpha}\right\rceil$ probes are necessary and
sufficient. Otherwise, approximately $\left\lceil
\frac{4\pi}{\pi-\alpha}\right\rceil$ probes 
are sufficient. In our method, the apex of the $\omega$-wedge that turns
around $\obj$ can be seen as a camera. Thus our method is a variant of
theirs because instead of fixing the circle on which the camera can move, we fix
the angle from which $\obj$ can be seen by the camera.

In this paper, we prove that no two convex polygons have the same $\omega$-cloud (see Theorem~\ref{thm:uniqueness} in Section~\ref{sec:props}). 
We further consider a harder, but more natural problem of reconstructing a polygon from its \emph{maximal} $\omega$-cloud, which is a variant of an $\omega$-cloud where the consecutive arcs with the same supporting circle are merged in one arc (see Section~\ref{subsec:max-cloud}). 
We give two conditions under which the maximal $\omega$-cloud uniquely determines $P$: 
(i) if $\omega$ with $\omega < \pi$ is fixed/given, 
or (ii) without fixing $\omega$, but with a guarantee that $\omega < \pi/2$. 
We show that if neither of the conditions (i), (ii) hold, then $P$ may not be unique. 
Finally, we show that, when conditions  (i) or (ii) hold, the polygon $P$ can be reconstructed 
in $O(n)$ time and  $O(1)$ working space in addition to the input, 
where $n$ is the number of arcs in the input $\omega$-cloud (see Section~\ref{sec:reconstr}). 

The results in this paper differ from the ones in  Bose et al.~\cite{DBLP:journals/comgeo/BoseCSS16} 
in the following ways. 
First, we consider a different probing method, where   
only one piece of information is available for each probe instead of three, 
namely the coordinates of the apex of the $\omega$-wedge.
Second,  
we prove a fundamental property of the $\omega$-cloud, that 
no two convex polygons have the same 
$\omega$-cloud, and thus $\omega$-clouds yield a more advantageous probing method than diameter probes. 
Third, we derive a list of interesting properties of the $\omega$-cloud, which were not known before, 
and  that allow us to provide a simple algorithm to reconstruct 
the polygon from its maximal $\omega$-cloud. 
Note as well, that most of our results are more general than the ones from Bose et al.~\cite{DBLP:journals/comgeo/BoseCSS16}
since the results hold for any $0 < \omega < \pi$, instead of the more restricted range: $0 < \omega \leq \pi/2$.

\section{Properties of the $\omega$-cloud}\label{sec:props}

Let $P$ be an $n$-vertex convex polygon in $\mathbb{R}^2$.
For any vertex $v$ of $P$, let $\ang{v}$ be the internal angle of $P$ at $v$.
Let $\omega$ be a fixed angle with $0 < \omega < \pi$.
Consider an \emph{$\omega$-wedge} $W$; recall that it is the set of points contained between two rays $a_\ell$ and $a_r$ emanating from the same point $q$ (the \emph{apex} of $W$), 
such that the angle between the two rays is exactly $\omega$. See Figure~\ref{fig:w-probe}a.
We call the ray $a_\ell$ (resp., $a_r$) that bounds 
$W$ from the left (resp., right) as seen from $q$,  
 the \emph{left} (resp., \emph{right}) \emph{arm} of $W$. 
We say that an $\omega$-wedge $W$ is \emph{minimal} for $P$ if $P$ is contained in $W$ and the arms of $W$ are tangent to $P$.
The \emph{direction} of $W$ is given by the bisector ray of the two arms of $W$.
Note that for each direction, there is a unique minimal $\omega$-wedge.

\begin{definition}[$\omega$-cloud~\cite{bmss11}]
\label{def:w-cloud}
The \emph{$\omega$-cloud} of $P$ is the locus of the apices of all minimal $\omega$-wedges for $P$.
\end{definition}

Let $\Omega$ denote the $\omega$-cloud of $P$.
We define an \emph{arc} $\Gamma$ of $\Omega$ as a maximal connected portion of $\Omega$ such that for every point of $\Gamma$ the corresponding minimal $\omega$-wedge is combinatorially the same, i.e., its left and right arm touch the same pair of vertices of $P$.
Due to the inscribed angle theorem,  an arc of the $\omega$-cloud is a circular arc,
and $\Omega$ consists of a circular sequence of circular arcs, where each 
pair of consecutive arcs shares an endpoint, see also~Bose et al.~\cite{bmss11} for more details.
We refer to such a sequence as a circular arc sequence. 
The \emph{supporting circle} of a circular arc is the circle containing the arc. 
The points on $\Omega$ that are the intersection of two consecutive arcs of $\Omega$ 
are called \emph{pivots}.
Note that if $\omega \geq \pi/2$, two consecutive arcs of the 
$\omega$-cloud can lie on the same supporting circle. 
In this case we call the pivot separating them a \emph{hidden pivot}. For example, points $b, d, f$ in Figure~\ref{fig:j-lous}a are hidden pivots for the polygon $abcdef$, when $\omega = 5/6$. 
The total number of pivots (including hidden ones) is between $n$ and $2n$~\cite{bmss11}.

A vertex $v$ of $P$ is called \emph{narrow} if
the angle $\ang{v}$ 
is at most $\omega$. 
Note that a pivot of $\Omega$ coincides with a vertex of $P$ if and only if that vertex is narrow.
In this case, we also call such pivot \emph{narrow}.
If $\ang{v} < \omega$, we call $v$ (the vertex or the pivot) \emph{strictly narrow}.
The portion of $\Omega$ between 
two points $s,t \in \Omega$, unless explicitly stated otherwise, 
is the portion of $\Omega$
one encounters when traversing $\Omega$ from $s$ to $t$ clockwise, excluding $s$ and $t$.
We denote this portion of $\Omega$ by $\Omega_{st}$.
The \emph{angular measure} of an arc $\Gamma$ of $\Omega$ is the angle spanned by $\Gamma$, measured from the center of its supporting circle; 
the angular measure of a \emph{portion} of an arc is defined similarly.
For two points $s,t$ on $\Omega$, the \emph{total angular measure} of $\Omega$ from $s$ to $t$, denoted as 
$\turnValue{s,t}$, 
is the sum of the angular measures of all arcs in $\Omega_{st}$ (including the at most two non-complete arcs of $\Omega$).

Each point $x$ in the interior of an arc corresponds to a unique minimal $\omega$-wedge $W(x)$ 
with direction $\dir{x}$.
Let $u$ be a pivot of $\Omega$.
If $u$ is not strictly narrow, 
$u$ also corresponds to a unique minimal $\omega$-wedge $W(u)$ with  direction  $\dir{u}$. 
Otherwise, 
$u$ corresponds to a closed interval of directions $[\dirl{u}, \dirr{u}]$, 
where the angle between $\dirl{u}$ and $d_r(u)$ equals $\omega - \ang{u}$. See Figure~\ref{fig:w-probe}b. 
Let $W_\ell(u)$ and $W_r(u)$ denote the minimal $\omega$-wedges with apex at $u$ and directions respectively $d_\ell(u)$ and $d_r(u)$. 
For points $x$ on $\Omega$ that are not strictly narrow pivots, we define $\dirr{x}$ and $\dirl{x}$ both to be equal to $\dir{x}$, and  $W_\ell(x)$ and $W_r(x)$
equal to $W(x)$.

The now give a crucial property of the $\omega$-cloud that is the basis of the other properties.

\begin{lemma}
\label{lemma:ang-measure}
Let $s$ and $t$ be two points on $\Omega$ such that
$\Omega_{st}$ contains no narrow pivots. 
Then the angle between $\dirr{s}$ and $\dirl{t}$ is $\turnValue{s,t}/2$.
\end{lemma}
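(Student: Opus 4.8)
The plan is to parametrize the relevant stretch of $\Omega$ by the direction of the corresponding minimal $\omega$-wedge, and to prove that this direction turns at exactly half the rate at which the apex sweeps the supporting circle it currently lies on. Since each direction determines a unique minimal $\omega$-wedge, the direction is a monotone function of the position of the apex along $\Omega$ (one full clockwise turn of the apex around $\Omega$ corresponds to the direction turning once around); in particular the phrase ``angle between $\dirr{s}$ and $\dirl{t}$'' makes sense as the total amount by which the direction turns as the apex moves clockwise from $s$ to $t$, and we read it that way (an accumulated turn, not a principal value). It thus suffices to show that this accumulated turn equals $\turnValue{s,t}/2$; we do this arc by arc, checking that nothing extra is gained or lost at the pivots in between.

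Consider first a single arc $\Gamma$ of $\Omega$ with supporting circle $C$, on which the left arm of the minimal wedge is tangent to $P$ at a vertex $p$ and the right arm at a vertex $q$; both $p$ and $q$ lie on $C$. On the interior of $\Gamma$ the apex $x$ lies strictly outside $P$, so $x\neq p$ and $x\neq q$, and since each arm is a ray emanating from $x$ that meets $P$ first at $p$, resp.\ $q$, the direction of the left arm is that of $p-x$ and the direction of the right arm is that of $q-x$. For a point $x$ moving along $C$ and a fixed point of $C$, the direction of the chord joining them equals half the central angle of $x$ plus a constant (a form of the inscribed angle theorem); applied to $p$ and to $q$ this gives that, when $x$ sweeps a sub-arc of $\Gamma$ of angular measure $\mu$, both arm directions rotate by exactly $\mu/2$ in the same sense, and hence so does their bisector $\dir{x}$. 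Therefore, over any (possibly partial) arc of angular measure $\mu$, the wedge direction turns by $\mu/2$.

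Next, no turn is gained or lost at a pivot $u$ in the interior of $\Omega_{st}$: by hypothesis $u$ is not narrow, so---whether $u$ is an ordinary or a hidden pivot---the apex passes through $u$ without interruption and the minimal wedge, hence $\dir{u}$, depends continuously on the apex there, so there is no jump. The only place where the direction can turn while the apex is held fixed is a strictly narrow pivot, and the only strictly narrow pivots that can play a role here are the endpoints $s$ and $t$, which are excluded from $\Omega_{st}$. The symbols $\dirr{s}$ and $\dirl{t}$ are set up exactly so as to be the wedge directions attained when one approaches $s$, resp.\ $t$, along the arc of $\Omega_{st}$ incident to it; so the stationary turn at $s$ and the one at $t$ are simply not counted. (If $s$ or $t$ is not strictly narrow, the symbol reduces to $\dir{s}$, resp.\ $\dir{t}$, and there is nothing to exclude.)

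Finally we assemble the pieces. Write $\Omega_{st}$ as a clockwise concatenation of arcs $\Gamma_1,\dots,\Gamma_k$, where $\Gamma_1$ and $\Gamma_k$ may be partial and the $k-1$ pivots separating consecutive $\Gamma_i$ lie in the interior of $\Omega_{st}$, and let $\mu_i$ be the angular measure of $\Gamma_i$, so that $\sum_i \mu_i = \turnValue{s,t}$. By the single-arc step the direction turns by $\mu_i/2$ over $\Gamma_i$, by the pivot step it does not turn at the interior pivots, and since the direction is monotone along $\Omega$ these contributions add rather than cancel; hence the total turn of the direction as the apex moves from $s$ to $t$ equals $\sum_i \mu_i/2 = \turnValue{s,t}/2$. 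As this total turn is precisely the angle between $\dirr{s}$ and $\dirl{t}$, we are done. The main point to get right is the treatment of the endpoints when they are strictly narrow pivots---ensuring that the interval of directions realized as the apex ranges over $\Omega_{st}$ is delimited by $\dirr{s}$ and $\dirl{t}$, and fixing orientations so that the per-arc contributions all have the same sign; the core observation, the half-angle relation on a single arc, is then an elementary consequence of the inscribed angle theorem.
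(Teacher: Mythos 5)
Your proof is correct and follows essentially the same route as the paper's: establish the half-angle relation on a single arc via the inscribed angle theorem, observe that non-narrow pivots contribute no turn, and sum over the arcs of $\Omega_{st}$. The only cosmetic difference is that in the single-arc step you track both arm directions continuously as chords from the moving apex to the fixed vertices of contact, whereas the paper compares the left arms of the two wedges at the arc's endpoints; both are the same application of the inscribed angle theorem.
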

\begin{proof}
Suppose first that $\Omega_{st}$ is a single arc, see Figure~\ref{fig:left-right-points}a.
The angle $\alpha$ between $\dirr{s}$ and $\dirl{t}$ equals the angle $\beta$ between the left arms of the two minimal $\omega$-wedges corresponding to these directions, which in turn equals angle $\beta'$. 
By the inscribed angle theorem, $\beta'$ is half the central angle spanning the arc $\Omega_{st}$, which, by definition of the angular measure, equals $\turnValue{s,t}/2$.

Now suppose that $\Omega_{st}$ consists of several arcs, which are either arcs of $\Omega$ or (at most two) connected portions of those arcs. 
By assumption, none of the pivots separating these arcs are narrow (although $s$ and $t$ may be, see the definition of $\Omega_{st}$).
Consider the change of direction 
of the minimal $\omega$-wedge as its apex moves from $s$ to $t$. After traversing an arc with endpoints $u$ and $v$, 
by the above observation, the direction changes by exactly $\turnValue{u,v}/2$.
Since none of the 
pivots on $\Omega_{st}$ are narrow, there is no change in direction as the $\omega$-wedge passes through each pivot.
Therefore the total change in direction for the $\omega$-wedge is the sum of the changes induced by the traversed arcs, that is, $\turnValue{s,t}/2$.
\end{proof}


\begin{corollary}
  \label{cor:total-measure}
  The sum of the angular measures of all arcs of $\Omega$ 
  is $2(2\pi-\sum_{v \in S}(\omega-\ang{v}))$, where $S$ is the set of all
  narrow vertices of $P$.  
  In particular, 
 $P$ has no strictly narrow vertices if and only if
  the sum of angular measures of the arcs of $\Omega$ is $4\pi$.   
\end{corollary}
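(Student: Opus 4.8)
The plan is to track the \emph{direction} $\dir{\cdot}$ of the minimal $\omega$-wedge as its apex performs one full clockwise traversal of $\Omega$, and to compute the total amount of turning in two different ways. The key global fact I would use is that over one such traversal the wedge direction turns by exactly $2\pi$: since $P$ is convex, each arm (being an outer tangent to $P$) rotates monotonically through a single full revolution, and hence so does their bisector. In particular all of the turning happens in the same rotational sense.

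Next I would split this total turn of $2\pi$ according to the narrow pivots. List the narrow pivots $v_1,\dots,v_k$ of $\Omega$ in clockwise order, with indices taken modulo $k$ (the case $k=0$ is handled at the end). On the portion $\Omega_{v_i v_{i+1}}$, which by construction contains no narrow pivot, Lemma~\ref{lemma:ang-measure} says that the wedge direction changes from $\dirr{v_i}$ to $\dirl{v_{i+1}}$ by exactly $\turnValue{v_i,v_{i+1}}/2$. Passing \emph{through} the narrow pivot $v_i$, the apex stays put while the direction sweeps from $\dirl{v_i}$ to $\dirr{v_i}$, contributing an additional turn of $\omega-\ang{v_i}$ by the definition of $\dirl{v_i}$ and $\dirr{v_i}$ (this is $0$ exactly when $v_i$ is narrow but not strictly narrow); since all turning is in the same sense, this contribution enters with a $+$ sign. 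The pivots $v_1,\dots,v_k$ cut $\Omega$ into whole arcs, each counted exactly once across the portions $\Omega_{v_i v_{i+1}}$, so summing around $\Omega$ gives
\[
2\pi \;=\; \tfrac12\sum_{i=1}^{k}\turnValue{v_i,v_{i+1}} \;+\; \sum_{v\in S}\bigl(\omega-\ang{v}\bigr)\;=\;\frac{M}{2}+\sum_{v\in S}\bigl(\omega-\ang{v}\bigr),
\]
where $M$ denotes the sum of angular measures of all arcs of $\Omega$; rearranging yields $M=2\bigl(2\pi-\sum_{v\in S}(\omega-\ang{v})\bigr)$.

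To finish, when $k=0$ there are no narrow pivots at all: choosing two distinct non-pivot points $s,t\in\Omega$ and applying Lemma~\ref{lemma:ang-measure} to $\Omega_{st}$ and to $\Omega_{ts}$ gives total turn $\turnValue{s,t}/2+\turnValue{t,s}/2=M/2$, which again equals $2\pi$, so $M=4\pi$, consistent with the displayed formula since then $S=\emptyset$. For the ``in particular'' part, note that $\omega-\ang{v}\ge 0$ for every narrow vertex $v$ (that is exactly what narrow means), so $M=4\pi$ iff $\sum_{v\in S}(\omega-\ang{v})=0$ iff $\ang{v}=\omega$ for all $v\in S$, i.e.\ iff $P$ has no strictly narrow vertex. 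The one step requiring care is the global claim that the wedge direction makes exactly one revolution (and not a larger multiple of $2\pi$) together with the matching of signs between the in-place rotations at narrow pivots and the rotations along arcs; once monotonicity of the turning is in place both are immediate, and the remainder is bookkeeping with Lemma~\ref{lemma:ang-measure}.
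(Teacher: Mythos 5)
Your proposal is correct and follows essentially the same route as the paper: both compute the total turn of $2\pi$ of the wedge direction over one traversal of $\Omega$, split it into the contributions of the arc portions between narrow pivots (via Lemma~\ref{lemma:ang-measure}) and the in-place turns of $\omega-\ang{v}$ at each narrow pivot, and rearrange. Your added care about monotonicity of the turning and the explicit treatment of the $k=0$ case and the ``in particular'' equivalence are fine refinements of the same argument.
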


\begin{proof}
Choose a point $x$ in the interior of an arc of  $\Omega$ and consider the change in the direction of the minimal $\omega$-wedge as its apex traverses the whole $\Omega$ and returns back to $x$. Since the minimal $\omega$-wedge at $x$ is unique,  the total angle the direction has turned is $2\pi$. 
The narrow pivots break $\Omega$ in a number of connected components, for which Lemma~\ref{lemma:ang-measure} applies. 
The total angular measure of the arcs of $\Omega$ is the sum of angular measures of these components. 
In a narrow pivot $u$,  
the minimal $\omega$-wedge turns from the position when its left arm coincides with the polygon edge incident to $u$ and following $u$ in the clockwise direction, to the position when its right arm coincides with the polygon edge 
preceding $u$. See Figure~\ref{fig:w-probe}b. 
Thus the angle the direction of the minimum $\omega$-wedge turned, while its apex has stayed in $u$, equals $\omega-\ang{u}$. 
Summing up such  deficit for all narrow pivots, we obtain that the total turn of the $\omega$-wedge corresponding to the arcs of $\Omega$ is 
$2\pi - \sum_{v \in S}(\omega-\ang{v})$. Lemma~\ref{lemma:ang-measure} as applied to the components of $\Omega$ as separated by the narrow pivots,  implies that 
the total angular measure of all arcs of $\Omega$ equals $2(2\pi - \sum_{v \in S}(\omega-\ang{v}))$.
\end{proof}

Since each arm of any minimal $\omega$-wedge touches $P$, and since $P$ is convex,  Lemma~\ref{lemma:ang-measure} implies the following. 

\begin{corollary}
\label{cor:max-measure}
For any arc $\Gamma$ of $\Omega$, 
 the angular measure of $\Gamma$ is at most $2(\pi-\omega)$.  
\end{corollary}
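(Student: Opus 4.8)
The plan is to identify the supporting circle of an arc in the same way the inscribed angle theorem was already used to show that arcs of $\Omega$ are circular. Fix an arc $\Gamma$ of $\Omega$ and let $v_\ell$, $v_r$ be the vertices of $P$ touched by the left and the right arm of the minimal $\omega$-wedge at every apex of $\Gamma$. If $v_\ell = v_r$, then for any apex $q \neq v_\ell$ both arms would lie on the line through $q$ and $v_\ell$, which is impossible since $\omega > 0$; hence $\Gamma$ is the single point $v_\ell$, its angular measure is $0$, and the bound holds trivially. So assume $v_\ell \neq v_r$.

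For every apex $q$ in the interior of $\Gamma$, the vertex $v_\ell$ lies on the left arm of $W(q)$ and $v_r$ lies on its right arm — on the rays themselves, since $P \subseteq W(q)$ — and these two rays emanate from $q$ at angle exactly $\omega$. Hence $\angle v_\ell q v_r = \omega$. By the converse of the inscribed angle theorem, the set of points on one side of the line through $v_\ell$ and $v_r$ from which the segment $v_\ell v_r$ is seen under angle $\omega$ is an arc of a circle through $v_\ell$ and $v_r$. Since $\Gamma$ is connected and its relative interior contains no pivot of $\Omega$, hence avoids both $v_\ell$ and $v_r$, all of $\Gamma$ lies on one such arc; in particular the supporting circle of $\Gamma$ passes through $v_\ell$ and $v_r$.

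It remains to bound the central angle of that arc. The chord $v_\ell v_r$ cuts the supporting circle into two arcs; the one not containing the apices of $\Gamma$ subtends the inscribed angle $\omega$ at those apices, so by the inscribed angle theorem its central angle is $2\omega$, and therefore the complementary arc — the one carrying $\Gamma$ — has central angle $2\pi - 2\omega = 2(\pi - \omega)$. As $\Gamma$ is contained in this arc, its angular measure is at most $2(\pi - \omega)$.

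The last computation is routine; the points that need a little care are the degenerate ones: excluding the case $v_\ell = v_r$, checking that $v_\ell$ and $v_r$ sit on the rays (not merely on the supporting lines) so that the angle at $q$ is $\omega$ and not $\pi-\omega$, and noting that $\Gamma$ cannot straddle the two mirror arcs determined by $v_\ell v_r$ because its relative interior avoids those two points. As an alternative to the last two paragraphs one may invoke Lemma~\ref{lemma:ang-measure}, which already equates the angular measure of $\Gamma$ with twice the turn of the $\omega$-wedge direction along $\Gamma$; the same inscribed-angle count then bounds that turn by $\pi - \omega$.
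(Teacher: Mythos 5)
Your proof is correct and rests on the same underlying fact as the paper's: the supporting circle of $\Gamma$ passes through the two touched vertices, the chord between them subtends the inscribed angle $\omega$ at the apices, and hence the arc carrying $\Gamma$ has central angle $2(\pi-\omega)$. The paper compresses this into a one-line appeal to Lemma~\ref{lemma:ang-measure} (tangency plus convexity bound the turn of the wedge direction over one arc by $\pi-\omega$), which is exactly the alternative you mention in your closing paragraph, so the two arguments coincide in substance.
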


The following is a simple fact directly implied by the definition of the $\omega$-cloud. 

\begin{lemma}
\label{lemma:normal-pivot}
Let $u$ be a non-narrow pivot of $\Omega$.  Let $x$ be the second point of intersection between the supporting circles of the two arcs of $\Omega$ adjacent to $u$ (the first point of intersection is $u$ itself).  The minimal $\omega$-wedge with the apex at $u$ touches $x$ with one of its arms. \end{lemma}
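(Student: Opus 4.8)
The plan is to set up the geometry of the two adjacent arcs around a non-narrow pivot $u$ and identify the intersection point $x$ with the vertex that the common arm touches. Let $u$ be a non-narrow pivot separating an arc $\Gamma_1$ (preceding $u$ clockwise) from an arc $\Gamma_2$ (following $u$). Since $u$ is not strictly narrow, there is a unique minimal $\omega$-wedge $W(u)$ with apex at $u$; as we approach $u$ along $\Gamma_1$ the combinatorial type of the minimal wedge is constant, and likewise along $\Gamma_2$. The key observation is that when the apex passes through a non-narrow pivot, exactly one of the two arms keeps touching the same vertex of $P$ while the other arm "jumps'' from one vertex of $P$ to the next (these being consecutive vertices of a single edge of $P$). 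So along $\Gamma_1$, say, both arms of the wedge are tangent to two fixed vertices $p$ and $q$ of $P$, and along $\Gamma_2$ the arms are tangent to $p$ and $q'$, where $q,q'$ are the endpoints of an edge of $P$ and $p$ is the vertex kept in common at the transition.

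Next I would translate tangency-through-a-fixed-point into a locus statement via the inscribed angle theorem, exactly as in the proof of Lemma~\ref{lemma:ang-measure}. For a fixed pair of vertices $p,q$ of $P$, the set of apices of minimal $\omega$-wedges whose arms pass through $p$ and through $q$ is precisely an arc of the circle through $p$ and $q$ on which the segment $pq$ subtends angle $\omega$ (the side of $pq$ away from $P$); this is the supporting circle of $\Gamma_1$. Similarly the supporting circle of $\Gamma_2$ is the circle through $p$ and $q'$ on which $pq'$ subtends angle $\omega$. Both of these circles pass through the common vertex $p$. Hence $p$ is a point common to both supporting circles. Since $u$ is the pivot, $u$ also lies on both supporting circles, and $u \neq p$ because $u$ is non-narrow (a pivot coincides with a vertex of $P$ only when that vertex is narrow). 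Two distinct circles meet in at most two points, so the two supporting circles meet in exactly $\{u, p\}$, which forces $x = p$. Finally, $W(u)$ is the limit of the wedges along $\Gamma_1$ (and along $\Gamma_2$), so one of its arms is tangent to $P$ at $p = x$, which is the claim.

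The main obstacle — really the only thing requiring care — is justifying cleanly that at a non-narrow pivot exactly one arm stays pinned to a single vertex of $P$ while the other crosses an edge, and in particular that the wedge does not degenerate at $u$ in some other way (e.g. both arms changing, or an arm touching along an edge rather than at a vertex). This follows from the combinatorial description of arcs: an arc is a maximal portion where both arms touch the same pair of vertices, so at the boundary pivot the type of at least one arm changes; and since $u$ is non-narrow there is a unique wedge $W(u)$, so the change is "continuous'' in the sense that the pinned arm is common to the limiting wedges from both sides — it must touch a vertex $p$ that is an endpoint of the edge swept by the other arm. I would state this as the combinatorial transition at a non-narrow pivot (citing the structure of arcs established just before Lemma~\ref{lemma:ang-measure} and in~\cite{bmss11}), and then the circle argument above is immediate. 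Care is also needed to handle the orientation/side consistently: both supporting circles lie on the far side of their chord from $P$, so the relevant arcs (not the full circles) all lie on the same side, which is why the construction is well-defined and why $W(u)$ indeed touches $P$ at $x$ rather than having $x$ lie on the wrong side.
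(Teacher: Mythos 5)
The paper does not actually prove Lemma~\ref{lemma:normal-pivot}: it is introduced as ``a simple fact directly implied by the definition of the $\omega$-cloud'' and left at that. Your write-up supplies exactly the justification the authors had in mind, and it is correct: at a non-narrow pivot $u$ one arm stays tangent to a common vertex $p$ while the other arm sweeps across an edge, the inscribed-angle description of the arcs puts $p$ on both supporting circles, the circles are distinct (otherwise $u$ would be a hidden pivot, hence narrow by Lemma~\ref{lemma:hidden-pivots}), and $u\neq p$ since a pivot coincides with a vertex only when that vertex is narrow; so the second intersection point is forced to be $p$, which $W(u)$ touches. The only soft spot is the step you yourself flag: uniqueness of $W(u)$ at a non-narrow pivot does not by itself exclude the degenerate event in which \emph{both} arms become flush with edges of $P$ at the same apex position (this requires two edge-supporting lines of $P$ to meet at angle exactly $\omega$, so it occurs only for finitely many values of $\omega$ and is implicitly ruled out by the paper's treatment of arcs as changing one tangency at a time). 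If you want the argument airtight for arbitrary $\omega$, either note that in this degenerate case the point of concurrency is itself the apex and handle it separately, or state the general-position assumption explicitly; as written, your proof covers every case the paper itself contemplates.
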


\begin{figure}[ht]
\begin{minipage}{0.32\linewidth}
\centering
\includegraphics{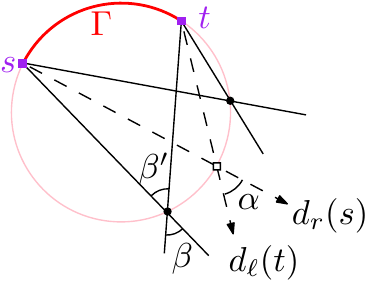}
\vspace{0.9em}
\\
(a)
\end{minipage}
\begin{minipage}{0.32\linewidth}
\centering
\includegraphics[page=2,scale = 0.83]{narrow-pivot}
\\
(b)
\end{minipage}
\begin{minipage}{0.32\linewidth}
\centering
\includegraphics[page=1, scale = 0.83]{narrow-pivot}
\\
(c)
\end{minipage}
\caption{
(a) Illustration for the proof of Lemma~\ref{lemma:ang-measure}, the case when $\Omega_{st}$ is a single arc. Vertices of $P$ are marked as filled disks.
(b) Point $x$ in the interior of an arc of $\Omega$, 
wedge $W(x)$ with its direction $\dir{x}$, 
points $x_\ell$ and $x_r$. (c) Narrow pivot $u$, 
wedges $W_\ell(u)$ and $W_r(u)$,  points $u_\ell$ and~$u_r$.}
\label{fig:left-right-points}
\end{figure}


\textbf{Characterization of narrow pivots.}
Let $x$ be a point on $\Omega$. It corresponds to two  minimal $\omega$-wedges $W_\ell(x)$, $W_r(x)$ (which coincide if $x$ is not a narrow pivot). 
Consider the open ray of the  right arm of $W_\ell(x)$. 
A small neighborhood of its origin is in the interior of the region bounded by $\Omega$,
 thus the ray intersects $\Omega$ at least once. 
Among the points of this intersection, let $x_\ell$ be the one closest to $x$. Define the point $x_r$ analogously for the left arm of $W_r(x)$. 
See Figure~\ref{fig:left-right-points}b,c. 

\begin{lemma}
\label{lemma:turns}
(a) Neither $\Omega_{x_{\ell}x}$ nor $\Omega_{xx_r}$ contains a narrow pivot. 
(b) If $x$ is a narrow pivot, then $\turnValue{x_\ell,x} = \turnValue{x,x_r} = 2(\pi - \omega)$. 
(c) If $x$ is not narrow, 
then either $\turnValue{x,x_r} = 2(\pi - \omega)$, or $x_r$ is the first narrow pivot following $x$ in the clockwise direction. A symmetric statement holds for $x_\ell$.
\end{lemma}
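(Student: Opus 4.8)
The plan is to track the minimal $\omega$-wedge as its apex moves clockwise along $\Omega$ starting from $x$. Fix notation: let $\rho$ be the open ray forming the left arm of $W_r(x)$, let $\ell$ be the line carrying $\rho$, and let $p'$ be the point where $\ell$ is tangent to $P$, so that $P$ lies in one closed half-plane bounded by $\ell$ and $x,p'\in\ell$; note that $p'=x$ precisely when $x$ is a strictly narrow pivot.

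First I would pin down a distinguished second apex on $\ell$. Let $z$ be the apex of the (unique) minimal $\omega$-wedge whose direction equals $\dirr{x}$ turned clockwise by $\pi-\omega$. A short computation — comparing arm directions with the bisector, which makes angle $\omega/2$ with each arm — shows that the right arm of this wedge is a supporting line of $P$ parallel to $\ell$ and on the same side, hence equal to $\ell$; therefore $z\in\ell$, and in fact $z$ lies on $\rho$, strictly beyond $p'$ (or strictly beyond $x$ when $p'=x$). Next I would observe that $\Omega\cap\ell\subseteq\{x,z\}\cup\{\text{narrow vertices of }P\text{ lying on }\ell\}$: any $w\in\Omega\cap\ell$ is the apex of a minimal $\omega$-wedge whose two arms are the at most two supporting lines of $P$ through $w$, so if $w\notin P$ then one of those arms is $\ell$, giving $w=x$ (if $\ell$ is the left arm) or $w=z$ (if the right arm), while if $w\in P$ then $w$ is a vertex of $P$ on $\ell$, and such a vertex lies on $\Omega$ only when narrow. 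Intersecting with $\rho$ and taking the first point after $x$ gives: $x_r=z$, unless $\ell$ carries a narrow vertex of $P$ between $x$ and $z$ on $\rho$, in which case $x_r$ is the first such vertex.

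The crux is part~(a). While the apex moves clockwise from $x$ up to its return to $\ell$ at $z$, the left arm's contact point never lies clockwise-before $p'$ on $\partial P$ (it starts at $p'$ and moves clockwise from there), while the right arm's contact point never lies clockwise-after $p'$ (it ends at $p'$); so throughout, the two contacts straddle $p'$ on $\partial P$. Hence no narrow vertex $w\neq p'$ of $P$ can be touched by either arm during this part of the sweep: touching $w$ would force the apex to sit at $w$ with \emph{both} arms touching $w$, contradicting the straddling. Since a pivot of $\Omega$ occurs only when an arm's contact passes a vertex of $P$, and a narrow pivot only when that vertex is narrow, it follows that $\Omega_{xx_r}$ contains no narrow pivot. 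The statement for $\Omega_{x_{\ell}x}$ is the mirror claim, with left/right and clockwise/counter-clockwise exchanged.

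Finally I would assemble (b) and (c). Suppose $x$ is not strictly narrow, so $\dirr{x}=\dir{x}$. If $x_r$ is not narrow then $x_r=z$ (by the second paragraph $x_r$ is $z$ or a narrow vertex), so $\dirl{x_r}=\dir{x_r}$ equals $\dir{x}$ turned clockwise by $\pi-\omega$; by~(a) and Lemma~\ref{lemma:ang-measure}, $\turnValue{x,x_r}/2$ equals the angle between $\dirr{x}$ and $\dirl{x_r}$, namely $\pi-\omega$, so $\turnValue{x,x_r}=2(\pi-\omega)$. If $x_r$ is narrow it is a narrow pivot, and by~(a) the first one clockwise from $x$; this is the dichotomy of~(c) for $x_r$, and the $x_\ell$ statement is symmetric. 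For~(b): if $x$ is a strictly narrow pivot then $p'=x$ and $\rho$ is not directed along an edge at $x$ (that would make $W_r(x)=W_\ell(x)$, impossible at a strictly narrow vertex), so $\ell\cap P=\{x\}$; hence the only vertex of $P$ on $\ell$ is $x$, $z$ is not a vertex, and $x_r=z$ is not narrow, giving $\turnValue{x,x_r}=2(\pi-\omega)$ as above and, symmetrically, $\turnValue{x_\ell,x}=2(\pi-\omega)$. The step I expect to be the real obstacle is the crux: making the ``straddling'' argument fully rigorous — handling the bookkeeping of which arm's contact moves how as the wedge turns, including at narrow vertices — so as to establish cleanly that narrow vertices of $P$ are the only possible obstruction, which is exactly the mechanism behind the exceptional case of~(c) and the reason~(a) holds.
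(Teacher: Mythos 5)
Your parts (a) and (c) are sound (your monotone ``straddling'' argument for (a) is a different, somewhat longer route than the paper's one-line crossing argument, but it works), and your reduction via Lemma~\ref{lemma:ang-measure} to an angle between $\dirr{x}$ and $\dirl{x_r}$ is exactly the paper's mechanism. The genuine gap is in part (b), and it stems from a false geometric claim: you assert that for a strictly narrow pivot $x$ the ray $\rho$ (the left arm of $W_r(x)$) is \emph{not} directed along an edge of $P$ at $x$, so that $\ell\cap P=\{x\}$. The opposite is true. The wedge $W_r(x)$ is the extreme (most clockwise) minimal wedge with apex at $x$ precisely because rotating it further would expel the edge of $P$ that follows $x$ in clockwise order; hence the left arm of $W_r(x)$ is flush with that edge, and $\ell\cap P$ is that entire edge, not the single point $x$. (Having one arm flush with an edge does not force $W_r(x)=W_\ell(x)$; that would require \emph{both} arms flush, i.e.\ $\ang{x}=\omega$. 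The paper's proof states this explicitly: ``Since $x$ is narrow, $\ell$ passes through an edge of $P$.'')

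This matters because your deduction ``$z$ is not a vertex, so $x_r=z$'' then fails: the far endpoint $v$ of that edge may itself be a narrow vertex (adjacent narrow vertices occur, e.g., in a thin triangle), in which case $v\in\Omega\cap\rho$ lies strictly before $z$ and $x_r=v\neq z$, a case your part (b) does not cover. The conclusion $\turnValue{x,x_r}=2(\pi-\omega)$ is still correct there, but for the reason the paper gives rather than yours: because $\ell$ contains an edge of $P$, the right arm of $W_\ell(x_r)$ is forced to lie along $\ell$ \emph{whether $x_r$ is $z$ or the vertex $v$}, and the two bisectors therefore differ by $\pi-\omega$. Your own classification of $\Omega\cap\ell$ already contains the ingredients to patch this — you only need to compute $\dirl{w}$ for a narrow vertex $w$ at the end of an edge contained in $\ell$ and observe its right arm also lies on $\ell$ — but as written the case is excluded by a wrong premise rather than handled. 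A secondary omission: your (b) treats only strictly narrow $x$; for $\ang{x}=\omega$ your earlier dichotomy yields only the weaker disjunction of (c), not the definite equality that (b) asserts.
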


\begin{proof}
We prove 
the statement for $x$ and $x_r$; 
 symmetric arguments apply for $x_\ell$ and $x$. 
 
\textit{(a)} If $\Omega_{xx_r}$ contained a narrow pivot $u$, then $u$ had to lie inside the wedge  
$W(x)$ (because it would be a vertex of $P$). However, in that case 
the left arm of $W(x)$ should have intersected $\Omega$ before $x_r$, contradicting to the definition of $x_r$. 

\textit{(b)} By Lemma~\ref{lemma:ang-measure}, $\turnValue{x,x_r}$ is twice the angle between $\dirr{x}$ and $\dirl{x_r}$.
Thus we need to show that the latter is $\pi - \omega$. 
By definition, $x_r$ lies on the line $\ell$ through the left arm
of the wedge $W_r(x)$. Since $x$ is narrow, $\ell$ passes through an edge of $P$. 
Now consider $W_\ell(x_r)$, the leftmost minimal $\omega$-wedge corresponding to $x_r$. 
Its right arm is aligned with $\ell$, as this is the leftmost possible line through $x_r$ 
with $P$ entirely to its left. See Figure~\ref{fig:left-right-points}c. 
Therefore the angle between the bisectors of $W_r(x)$ and $W_\ell(x_r)$ is $\pi-\omega$, and $\turnValue{x,x_r}$ is $2(\pi-\omega)$. 

\textit{(c)} Let $x$ be not a narrow pivot. We suppose $\turnValue{x,x_r} \neq 2(\pi - \omega)$, and will show that $x_r$ is a narrow pivot of $\Omega$. By the above observation that there can be no narrow pivots between $x$ and $x_r$, this will imply the claim.  
Suppose $x_r$ is not a narrow pivot. 
Then there is a vertex $v \neq x_r$ of $P$, 
 where the left arm of $W(x)$ touches $P$. 
Then the right arm of $W_\ell(x_r)$ must be aligned with $\ell$, which would mean $\turnValue{x,x_r} = 2(\pi - \omega)$. 
A contradiction. 
\end{proof}

The above lemma characterizes narrow pivots in terms of the position of the corresponding minimal $\omega$-wedges. 
Now we characterize them in terms of the adjacent arcs and their measures, the information that can be used by 
the reconstruction algorithm. 

\begin{lemma} 
\label{lemma:narrow-char}
Let $u$ be a pivot of $\Omega$, and let $v$ and $w$ be the points on $\Omega$ such that 
$\turnValue{v,u} = \turnValue{u, w} = 2(\pi - \omega)$. \\ 
 (a) If pivot $u$ is narrow, then 
the supporting circles of all the arcs of $\Omega_{vw}$ pass through $u$.\\
(b) Pivot $u$ is narrow, if at least one of the following conditions is satisfied: 
(i) $\Omega_{vu}$  consists of a single arc;  (ii) there is an  arc $\Gamma$ of 
$\Omega_{vu}$ that is not incident to $u$, such that the supporting circle of $\Gamma$ contains $u$. 
\\
 A symmetric statement holds for $\Omega_{uw}$.
\end{lemma}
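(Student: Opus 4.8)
My plan rests on two elementary facts that I will use throughout. \textbf{(F1)} If the arms of an arc $\Gamma$ of $\Omega$ touch the vertices $a,b$ of $P$, then every apex on $\Gamma$ sees the segment $ab$ under the fixed angle $\omega$, so by the inscribed-angle theorem the supporting circle $C$ of $\Gamma$ passes through $a$ and $b$; furthermore the set of points of $C$ that see $ab$ under $\omega$ is a circular arc $A\subseteq C$ of angular measure $2(\pi-\omega)$ with endpoints $a,b$, and $\Gamma\subseteq A$. Conversely, any point $q\in C\setminus\{a,b\}$ sees $ab$ under $\omega$ or under $\pi-\omega$; and if such a $q$ lies on $\Omega$ and is not a vertex of $P$, then the angle subtended by $P$ at $q$ equals $\omega$, so that, in the first of these two subcases, the arms of the minimal wedge at $q$ must pass through $a$ and $b$. \textbf{(F2)} As the apex traverses $\Omega$ clockwise, the contact point of each arm moves monotonically along $\partial P$; in particular no unordered pair of vertices of $P$ occurs as the combinatorial type of two distinct arcs of $\Omega$.

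For part~(a), assume $u$ is narrow. By Lemma~\ref{lemma:turns}(b), $\turnValue{u_\ell,u}=\turnValue{u,u_r}=2(\pi-\omega)$; since $s\mapsto\turnValue{s,u}$ is monotone (strictly away from narrow pivots), the point with $\turnValue{\cdot,u}=2(\pi-\omega)$ is unique, so $v=u_\ell$ and likewise $w=u_r$, and by Lemma~\ref{lemma:turns}(a) neither $\Omega_{vu}$ nor $\Omega_{uw}$ contains a narrow pivot. By~(F1) it then suffices to prove that one arm of the minimal wedge touches $u$ at every point of $\Omega_{vu}$, and symmetrically of $\Omega_{uw}$. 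At $u$ the wedge $W_\ell(u)$ has its left arm flush with the edge of $P$ that follows $u$ clockwise, so its right arm meets $P$ only at $u$; since $v=u_\ell$ lies on that right arm, the minimal wedge at $v$ also has an arm through $u$; and by the monotonicity in~(F2) this arm can neither leave $u$ nor pass over $u$ and return, hence it keeps touching $u$ throughout $\Omega_{vu}$. (When $v$ or $w$ is itself a narrow pivot, or an arm is flush with an edge, one works with $W_r(v),W_\ell(w)$ instead of $W(v),W(w)$.)

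For part~(b)(i), if $\Omega_{vu}$ is a single arc $\Gamma$ then its angular measure is $\turnValue{v,u}=2(\pi-\omega)$, the maximum allowed by Corollary~\ref{cor:max-measure}. With $a,b$ the touched vertices of $P$, the arc $A$ from~(F1) also has measure $2(\pi-\omega)$ and contains $\Gamma$, so $\Gamma=A$; hence the pivots bounding $\Gamma$, which are $v$ and $u$, equal $a$ and $b$. Thus $u$ is a vertex of $P$, and a vertex of $P$ that is a pivot of $\Omega$ is narrow. For part~(b)(ii), let $\Gamma$ be an arc of $\Omega_{vu}$ not incident to $u$ whose supporting circle $C$ contains $u$, with arms touching $a,b$, and suppose for contradiction that $u$ is not narrow — hence not a vertex of $P$, so $u\in C\setminus\{a,b\}$. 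If $u$ sees $ab$ under $\omega$, then by~(F1) the arms of the minimal wedge at $u$ pass through $a$ and $b$, so $u$ has the same combinatorial type as $\Gamma$; by~(F2) this forces $u$ into the closure of $\Gamma$, and since $u\notin\Omega_{vu}\supseteq\Gamma$, $u$ must be an endpoint of $\Gamma$ — contradicting that $\Gamma$ is not incident to $u$. If instead $u$ sees $ab$ under $\pi-\omega$, i.e.\ $u$ lies on the minor arc of $C$, one argues that $\Omega$ cannot reach that arc: getting there from $\Gamma$ would require the apex to circle around one of $a,b$, and comparing the turn this induces (via Lemma~\ref{lemma:ang-measure}) with the hypothesis $\turnValue{v,u}=2(\pi-\omega)$ yields a contradiction. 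The symmetric statements for $\Omega_{uw}$ follow by reversing the sense of traversal.

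The main obstacle will be this last step of~(b)(ii): excluding the possibility that $u$ lies on the minor arc of $C$, which amounts to a careful analysis of how the minimal wedge and the contact points of its arms on $\partial P$ evolve as the apex moves from $\Gamma$ to $u$ — the same bookkeeping of which arm is flush with which edge, and in which orientation the contact points sweep $\partial P$, that makes the monotonicity argument of part~(a) rigorous. The routine degenerate cases (a narrow pivot at $v$ or $w$, an arm flush with an edge of $P$, or $2(\pi-\omega)$ exceeding the total angular measure of $\Omega$) are dealt with separately.
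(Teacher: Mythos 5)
Parts (a) and (b)(i) of your proposal are sound and essentially reproduce the paper's own arguments: for (a) you identify $v=u_\ell$, $w=u_r$ via Lemma~\ref{lemma:turns}(b) and track the arm that stays pinned at $u$; for (b)(i) you use that an arc of measure $2(\pi-\omega)$ must coincide with the full locus of apexes seeing its two contact vertices under angle $\omega$, so its endpoints are those vertices (the paper phrases this via an inscribed quadrilateral, but it is the same computation). Your (F2) is stated too strongly for \emph{unordered} pairs — the same pair $\{a,b\}$ can in principle support two arcs with the roles of the arms swapped, lying on the two reflected circles through $a,b$ — but the ordered version you actually use in (b)(ii), case~1, is correct, and in that case $u$ and $\Gamma$ lie on the same arc of $C$, so the orientation is forced and the connectivity argument goes through.

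The genuine gap is the second subcase of (b)(ii), which you yourself flag as ``the main obstacle'': excluding that the non-narrow pivot $u$ lies on the arc of $C$ from which the chord $ab$ subtends the angle $\pi-\omega$. This is not a routine degenerate case. When $\omega<\pi/2$ it is immediate (a point of $\Omega$ sees $ab$ under at most $\omega<\pi-\omega$), but the lemma must hold for all $\omega<\pi$, and it is invoked by the paper precisely in the regime $\omega\ge\pi/2$ (hidden pivots, the $\omega$-aware algorithm, and the uniqueness of $\omega$ in Theorem~\ref{thm:uniqueness}); there a point of $\Omega$ on that arc sees $ab$ under $\pi-\omega\le\omega$, so nothing elementary rules it out, and your sketch (``the apex would have to circle around one of $a,b$, contradicting $\turnValue{v,u}=2(\pi-\omega)$ via Lemma~\ref{lemma:ang-measure}'') is exactly the part that needs a proof, complicated by the possible presence of a narrow pivot inside $\Omega_{vu}$ where the direction turns without contributing to $\turnValue{\cdot,\cdot}$. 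The paper sidesteps this dichotomy entirely: it looks at the pivot $u'$ at the clockwise end of $\Gamma$ and shows that if the wedge does not begin pivoting about $u$ at $u'$, then the two contact vertices of $\Gamma$ and the two contact vertices of the arc incident to $u$ would have to occur in clockwise order along $P$ while lying on the two circles through $u$ and $u'$, which is incompatible with their convex position (with a separate short argument when $u'$ is narrow). You should either carry out your minor-arc exclusion in full or switch to an argument of this latter type; as written, part (b)(ii) is a plan rather than a proof.
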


\begin{proof}
(a) Suppose that $u$ is a narrow pivot. 
Recall the definition of $u_\ell, u_r$ from above, see also Figure~\ref{fig:left-right-points}c.
 By 
Lemma~\ref{lemma:turns}b the points $v$ and $w$ coincide with 
$u_\ell$ and $u_r$, respectively. See Figure~\ref{fig:circles}. The minimal $\omega$-wedge, as its apex traverses $\Omega_{u_{\ell}u}$, is always touching $u$ with its left arm.
Indeed, when its apex is at $u_\ell$, its left arm is aligned with the line through $u$ and $u_\ell$, and thus with the edge $e_\ell$ of $P$ incident to $u$ and preceding it in clockwise order.
When the apex 
reaches $u$, its direction is $\dirl{u}$ and its 
right arm is touching $e_\ell$. Thus every placement of the minimal $\omega$-wedge in the considered interval was touching $u$ with its left arm. Symmetrically, every minimal $\omega$-wedge
 with the apex at any point between $u$ and $u_r$ is touching $u$ with its right arm. 
The claim is implied.

(b) Suppose first that  $\Omega_{vu}$ consists of a single arc $\Gamma$. 
We show that any minimal $\omega$-wedge corresponding to this arc must pass through both $u$ and $v$, which means that they both are vertices of $P$, and thus narrow pivots.  To see this, pick any point $y$ on $\Gamma$, and any point
$z$ on the supporting circle of $\Gamma$ outside $\Gamma$. See Figure~\ref{fig:b}. The quadrilateral $zvyu$ is inscribed in the circle, thus the sum of the angle at $z$ and the one at $y$ is $\pi$ (the angles $\beta$ and $\gamma$ in  Figure~\ref{fig:b}). The former angle is half the angular measure of $\Gamma$ (the angle $\beta$ in Figure~\ref{fig:b} is half the angle $\alpha$), i.e., it is $\pi-\omega$. Therefore the angle at $y$ is $\omega$, i.e., the minimal $\omega$-wedge 
with the apex $y$ is passing through $u$ and $v$.

\begin{figure}
\begin{minipage}{0.49\linewidth}
\centering
\includegraphics[page=1]{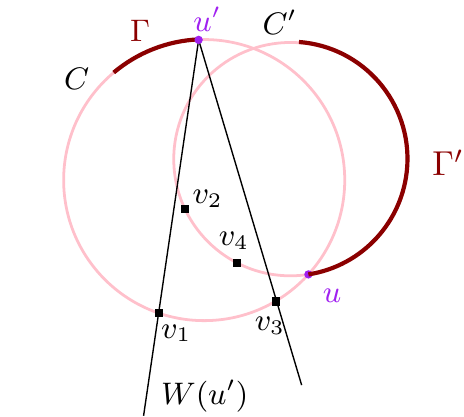}
\\ (a)
\end{minipage}
\begin{minipage}{0.49\linewidth}
\centering
\includegraphics[page=2]{narrow-char-proof-1}
\\ (b)
\end{minipage}

\caption{Illustration for the proof of Lemma~\ref{lemma:narrow-char}b,  the case when the portion of $\Omega$ between $v$ and $u$ consists of several  arc.}
\label{fig:narrow-char}
\end{figure}

Suppose  now that $\Omega_{vu}$ contains at least two arcs, and suppose that there exists arc  $\Gamma \in \Omega_{vu}$ that is not incident to $u$, but whose supporting circle $C$ passes through $u$, see Figure~\ref{fig:narrow-char}. 
Let $\Gamma'$ be the arc of $\Omega_{vu}$ incident to $u$, and let $C'$ be the supporting circle of $\Gamma'$.  
Let $u'$ be the most clockwise endpoint of $\Gamma$. 
It is enough to show that pivot $u'$ corresponds to the turn of the $\omega$-wedge around $u$: indeed, this would imply 
that $u$ coincides with a vertex of $P$, and thus it is a narrow pivot. 

Assume for the sake of contradiction that the above does not hold.
Suppose first that the pivot $u'$ is not narrow. 
Observe that the 
wedge $W(u')$ cannot intersect arc $\Gamma$ with its arms. 
Thus the arms of $W(u')$ touch circle $C$ in the apex $u'$ and two points $v_1, v_3$, that are the two vertices of $P$ that correspond to the arc $\Gamma$. See Figure~\ref{fig:narrow-char}a. 
The two vertices $v_2,v_4$ corresponding to the arc $\Gamma'$ 
 must lie inside $W(u')$.  
By construction, $v_1,v_2,v_3,v_4$ must be a  subsequence of the sequence of vertices of the polygon $P$ 
in clockwise order. 
However, by the assumption of the lemma, both $C$ and $C'$ pass through $u$ and $u'$, 
and thus points  $v_1,v_2,v_3,v_4$ (in this order)  are not in convex position, contradicting to $P$ being convex. 

Suppose now that the pivot $u'$ is narrow. The minimal $\omega$-wedge as its apex traverses arc $\Gamma$ 
 turns around $u'$ and some vertex $v_1$ of $P$. See Figure~\ref{fig:narrow-char}b.
The minimal $\omega$-wedge $W(u)$ of $u$ must contain $u'$ since it is a vertex of $P$.  
For example, in Figure~\ref{fig:narrow-char}b the right arm of this wedge passes through $u'$.
However, in this case one of the vertices of $P$ that correspond to the arc 
$\Gamma'$ is outside $W(u')$, see the unfilled circle mark in Figure~\ref{fig:narrow-char}b. A contradiction. 

The proof of item (b) for $\Omega_{uw}$ is analogous. 
\end{proof}

\begin{figure}
\centering
\includegraphics{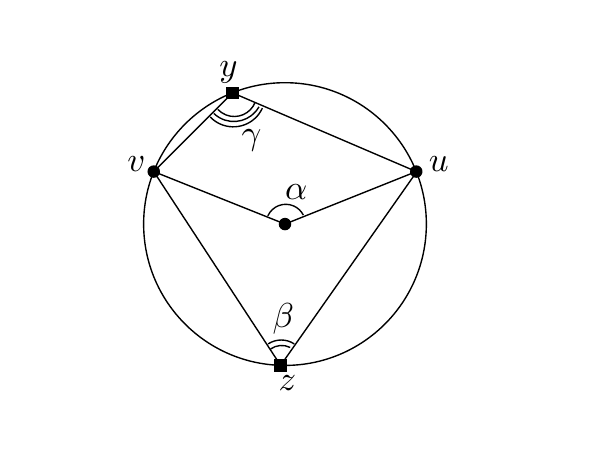}
\caption{ Proof of Lemma~\ref{lemma:narrow-char}b, the case when the portion of $\Omega$ between $v$ and $u$ is a single arc.}
\label{fig:b}
\end{figure}

\begin{observation}
\label{obs:narrow}
If $u$ is a narrow pivot followed by a non-narrow pivot, then the condition (ii) of item (b) of Lemma~\ref{lemma:narrow-char} for $\Omega_{uw}$ is satisfied, 
and 
the supporting circle of the arc following $w$ does not pass through $u$. 
\end{observation}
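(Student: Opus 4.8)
The plan is to analyze the behavior of the minimal $\omega$-wedge as its apex traverses the portion $\Omega_{uw}$, using the characterization of $u_\ell, u_r$ and $w$ established in Lemma~\ref{lemma:turns}b. Since $u$ is a narrow pivot, Lemma~\ref{lemma:turns}b tells us that $w$ coincides with $u_r$ and that $\turnValue{u,w} = 2(\pi-\omega)$. As shown in the proof of Lemma~\ref{lemma:narrow-char}a, every minimal $\omega$-wedge whose apex lies on $\Omega_{uw}$ touches $u$ with its right arm; consequently, the supporting circle of \emph{every} arc of $\Omega_{uw}$ passes through $u$. In particular, if $\Omega_{uw}$ contains at least two arcs, then picking the arc $\Gamma$ of $\Omega_{uw}$ that is \emph{not} incident to $u$ (for instance the first such arc) gives an arc whose supporting circle passes through $u$, which is exactly condition (ii) of Lemma~\ref{lemma:narrow-char}(b) for $\Omega_{uw}$. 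I will need to rule out the degenerate case where $\Omega_{uw}$ consists of a single arc; but in that case $w$ itself is a pivot, and since $u$ is narrow and immediately followed by a non-narrow pivot, the single arc from $u$ would have angular measure $2(\pi-\omega)$, forcing the next pivot after $u$ to be narrow by Lemma~\ref{lemma:turns}c (or directly by Lemma~\ref{lemma:narrow-char}(b)(i) applied at $w$) — contradicting the hypothesis. Hence $\Omega_{uw}$ genuinely contains an arc not incident to $u$, and condition (ii) holds.

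For the second assertion, I will argue that the supporting circle $C^+$ of the arc immediately following $w$ on $\Omega$ cannot pass through $u$. Suppose for contradiction that it does. By Lemma~\ref{lemma:turns}b applied at $u$, the right arm of $W_r(u)$ is aligned with the line through $u$ and $w=u_r$, which contains the polygon edge incident to $u$ and following it in clockwise order; call that edge $e_r$ and its other endpoint $v_1$. The wedge $W(w)$ (or $W_\ell(w)$ if $w$ is a narrow pivot) has its left arm aligned with the line $uw$, i.e.\ with $e_r$, and touches $P$ at $u$ (and at $v_1$). Moving the apex past $w$ onto the next arc, the left arm of the minimal $\omega$-wedge must leave $u$: since $w$ is the endpoint of the interval of wedges touching $u$ with the right arm, the wedge past $w$ no longer touches $u$, so $u$ lies strictly interior to that wedge. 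But then $u$ — a vertex of $P$ — is in the interior of a minimal $\omega$-wedge for $P$, contradicting that every minimal $\omega$-wedge contains $P$ with both arms tangent; a vertex of $P$ on the supporting circle $C^+$ of the following arc would force that arm to pass through $u$ rather than be tangent elsewhere, which is incompatible with the wedge being minimal right after $w$. Hence $C^+$ does not pass through $u$.

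I expect the main obstacle to be the second part: making precise why the supporting circle of the arc following $w$ fails to pass through $u$, since this requires carefully tracking which vertices of $P$ the two arms of the minimal $\omega$-wedge touch as the apex crosses the pivot $w$, and distinguishing the subcases where $w$ is narrow versus non-narrow (the hypothesis that $u$ is followed by a non-narrow pivot controls $w$'s status only indirectly). The cleanest route is probably to invoke Lemma~\ref{lemma:narrow-pivot} at $w$ when $w$ is non-narrow — the second intersection point of $C$ (the supporting circle of the arc before $w$, which passes through $u$) and $C^+$ is touched by an arm of $W(w)$; showing this point is not $u$ itself, together with a convexity argument as in the proof of Lemma~\ref{lemma:narrow-char}(b), yields that $C^+$ and $C$ meet at a point other than $u$, so $C^+$ passes through $u$ only if $u = w$, which is false. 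The narrow-$w$ case is handled directly by the turning-interval argument above. Assembling these pieces gives the observation.
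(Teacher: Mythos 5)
The first half of your argument is sound and is essentially the paper's: since $u$ is narrow, every minimal $\omega$-wedge with apex on $\Omega_{uw}$ touches $u$ with the same arm, so the supporting circle of every arc of $\Omega_{uw}$ passes through $u$, and the degenerate single-arc case is excluded because it would make the pivot immediately following $u$ narrow, contradicting the hypothesis. (Your citation of Lemma~\ref{lemma:turns}c for that exclusion is off, since that item assumes $x$ is \emph{not} narrow, but your parenthetical alternative via Lemma~\ref{lemma:narrow-char}(b)(i) applied at $w$ is the right one.)

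The second half has a genuine gap. Your central claimed contradiction --- that $u$, a vertex of $P$, would lie in the interior of a minimal $\omega$-wedge, ``contradicting that every minimal $\omega$-wedge contains $P$ with both arms tangent'' --- is not a contradiction: all but two vertices of $P$ lie strictly inside any given minimal $\omega$-wedge. The follow-up assertion, that a vertex of $P$ on $C^+$ ``would force that arm to pass through $u$,'' is precisely the step that needs proof and is stated without justification; likewise your alternative route via Lemma~\ref{lemma:normal-pivot} leaves ``showing this point is not $u$ itself'' as an unproved subgoal, which is again the entire content of the claim. The paper closes this concretely: since $\turnValue{u,w}=2(\pi-\omega)$, the arm of the wedge at $w$ is collinear with the edge $e=uu'$, with $u'$ lying on the segment $uw$ between $u$ and $w$; if the supporting circle of the arc $\Gamma'$ following $w$ passed through $u$, then the wedge $W(a)$ at any interior point $a$ of $\Gamma'$ would touch $u$ with that arm, and then $u'$ would fall \emph{outside} $W(a)$, contradicting $P\subseteq W(a)$. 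You introduce $v_1=u'$ but never use it to derive the contradiction; that is the missing idea.
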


\begin{proof}
Let $e = uu'$ be the edge of $P$ adjacent to $u$ and following it in the clockwise direction.  
Consider the  motion of the minimal $\omega$-wedge as its apex goes from $u$ to $w$. In the beginning, it rotates around $u$ and $u'$. 
The first pivot $w'$ the apex reaches corresponds to the change of one of those points.
If $u$ is changed, then the right arm of the wedge must be collinear with $e$. In this case $w' = u' = w$, and this is a narrow pivot, which contradicts the assumption of this observation. 
 If $u'$ is changed, then  the condition (ii) of item (b) of Lemma~\ref{lemma:narrow-char} 
is satisfied for $\Omega_{uw}$  and the supporting circle of the arc following pivot $w'$ is passing through $u$ and some other vertex of $P$. 

Let $\Gamma'$ be the arc following $w$. Since $\turnValue{u,w} = 2(\pi-\omega)$, the minimal $\omega$-cloud with the apex at $w$
has its right arm collinear with $e$.  If the supporting circle of $\Gamma'$ was passing through $u$, then the minimal $\omega$-wedge $W(a)$
 with the apex at any point $a$ in the interior of  $\Gamma'$ would touch $u$  with its right arm. But then the point $u'$ is outside 
 $W(a)$, yielding a contradiction.  
\end{proof}

\textbf{Characterization of hidden pivots.}
Recall that a pivot is called hidden if its incident arcs have the same supporting circle.

\begin{lemma}
\label{lemma:hidden-pivots}
Let $u$ be a hidden pivot of $\Omega$, let $\Gamma_\ell$ and $\Gamma_r$ be the two arcs of $\Omega$ incident to $u$, and let $v$ and $w$ be the other endpoints of $\Gamma_\ell$ and $\Gamma_r$, respectively.
Then $v$, $u$, and $w$ are all narrow and each of the arcs $\Gamma_\ell$, $\Gamma_r$ has angular measure $2(\pi-\omega)$.
\end{lemma}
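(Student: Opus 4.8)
The plan is to show that a hidden pivot forces its two incident arcs to be maximal (angular measure exactly $2(\pi-\omega)$), and that this maximality in turn forces all three relevant pivots to be narrow. First I would recall that, by Corollary~\ref{cor:max-measure}, each of $\Gamma_\ell$ and $\Gamma_r$ has angular measure at most $2(\pi-\omega)$, and that since $u$ is hidden, $\Gamma_\ell$ and $\Gamma_r$ lie on a common circle $C$. I would then apply Lemma~\ref{lemma:ang-measure} (or rather the reasoning behind it) to the combined portion $\Gamma_\ell \cup \Gamma_r$: the direction of the minimal $\omega$-wedge turns by half the total angular measure of this portion as its apex traverses it, provided no narrow pivot lies strictly inside. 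The key observation is that a hidden pivot is by definition \emph{not} narrow (a narrow pivot coincides with a polygon vertex, whereas a hidden pivot is a genuine point of $\Omega$ where the combinatorics change without the supporting circle changing — I would make this explicit, noting that if $u$ were narrow there would be a direction interval at $u$ and the two arcs could not share a circle). Hence Lemma~\ref{lemma:ang-measure} applies across $u$, and since both arcs lie on circle $C$, the central angle of the union equals the sum of the two central angles.

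Next I would use Lemma~\ref{lemma:normal-pivot} applied to the non-narrow pivot $u$: the two supporting circles of the arcs adjacent to $u$ must intersect in a second point $x$ distinct from $u$, and the $\omega$-wedge $W(u)$ touches $x$ with one arm. But here the two supporting circles coincide, so there is no well-defined ``second intersection point'' — the circles are identical. The resolution is that, when $\Gamma_\ell$ and $\Gamma_r$ share a circle, the wedge $W(u)$ must touch $P$ at the two endpoints of a chord of $C$, and the only way the combinatorial type can change at $u$ without the circle changing is if one arm of $W(u)$ rotates around a fixed vertex of $P$ while the other arm has just finished (or is about to start) rotating around another vertex — which is precisely the situation at a narrow pivot of the \emph{neighbouring} endpoints. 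More concretely: as the apex moves along $\Gamma_\ell$ toward $u$ and then along $\Gamma_r$ away from $u$, the pair of touched vertices changes at $u$; since the circle does not change, by the inscribed-angle characterization the touched vertices on the two sides must lie on $C$, and the transition at $u$ corresponds to one arm sweeping \emph{past} a vertex that lies on $C$ itself — forcing that vertex (one of $v$ or $w$, and symmetrically $u$ by running the argument from the other side) to be a pivot on $C$, i.e. narrow.

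I would then pin down the measures. Once we know $v$, $u$, $w$ are narrow, Lemma~\ref{lemma:turns}(b) applies at $u$: $\turnValue{u,x_r} = 2(\pi-\omega)$ where $x_r$ is the first point where the left arm of $W_r(u)$ meets $\Omega$ again; by part (a) of that lemma the portion $\Omega_{ux_r}$ has no narrow pivot, and since $w$ is the first narrow pivot clockwise from $u$ (there are no narrow pivots strictly inside $\Gamma_r$ because $\Gamma_r$ is a single arc, by Lemma~\ref{lemma:turns}(c) combined with the fact that pivots interior to a single arc don't exist), we get $x_r = w$ and hence $\turnValue{u,w} = 2(\pi-\omega)$, i.e. $\Gamma_r$ has angular measure $2(\pi-\omega)$. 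The symmetric argument on the $\Gamma_\ell$ side gives $\turnValue{v,u} = 2(\pi-\omega)$. Alternatively, and perhaps more cleanly, I would invoke Lemma~\ref{lemma:narrow-char}(a): with $v$ and $w$ so chosen that $\turnValue{v,u} = \turnValue{u,w} = 2(\pi-\omega)$, narrowness of $u$ means the supporting circles of all arcs of $\Omega_{vw}$ pass through $u$; matching this against the single arcs $\Gamma_\ell, \Gamma_r$ on circle $C$ through $u$ gives the measures directly.

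The main obstacle I anticipate is the very first step: rigorously arguing that a hidden pivot cannot itself be narrow and, more importantly, that the combinatorial change at $u$ on a fixed circle $C$ is possible \emph{only} when the endpoints $v, w$ (and $u$) are narrow. The clean way to handle this is to look at which vertices of $P$ the two arms of $W(\cdot)$ touch just before and just after $u$: all four such vertices lie on $C$ by the inscribed-angle theorem, and since $P$ is convex these four points (in clockwise order) together with $u$ must be in convex position on $C$; working out that the only consistent configuration has the shared vertex of the ``before'' and ``after'' pairs equal to an endpoint of the arc — forcing that endpoint onto $C$ as a polygon vertex, hence a narrow pivot — is the crux, and it parallels the convex-position contradiction used in the proof of Lemma~\ref{lemma:narrow-char}(b).
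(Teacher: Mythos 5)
Your proposal has a genuine gap, and it sits exactly where you yourself locate the crux. The premise that ``a hidden pivot is by definition \emph{not} narrow'' is false and contradicts the very statement you are proving, which asserts that $u$ \emph{is} narrow. A hidden pivot is defined only by its two incident arcs sharing a supporting circle; nothing in that definition excludes narrowness, and the whole point of the lemma is that such a pivot must coincide with a vertex of $P$. (Concretely, in the hexagon $abcdef$ with $\omega=5\pi/6$ used for Proposition~\ref{prop:non-unique}, the hidden pivots $b,d,f$ \emph{are} vertices of the hexagon, hence narrow.) Starting from this wrong premise the middle of your argument cannot recover: you apply Lemma~\ref{lemma:normal-pivot} to the ``non-narrow pivot $u$'', correctly observe that it breaks down because the two supporting circles coincide, and then substitute a description (``one arm sweeping past a vertex that lies on $C$'') that is never justified and that ends by concluding the pivots are narrow --- the opposite of what you assumed. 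The fallback via Lemma~\ref{lemma:narrow-char}(a) is circular as set up: that lemma presupposes $u$ is narrow, and identifying its points $v,w$ (defined by $\turnValue{v,u}=\turnValue{u,w}=2(\pi-\omega)$) with the endpoints of $\Gamma_\ell,\Gamma_r$ is precisely the measure claim to be proved.

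The missing idea is short, and it is the opposite of your premise. As the apex traverses $\Gamma_\ell$, the two arms touch two vertices of $P$, both lying on the common supporting circle $C$ by the inscribed angle theorem. The combinatorial type changes at $u$ while the supporting circle does not; the only way this can happen is that the vertex touched by the left arm is $u$ itself --- the apex runs into the very vertex it has been pivoting around. Hence $u$ is a vertex of $P$, i.e., a narrow pivot. Writing $a$ for the vertex touched by the right arm, when the apex reaches $u$ the wedge is $W_\ell(u)$ and its right arm passes through $a$ and $u$; convexity of $P$ forbids a vertex between $a$ and $u$, so $a$ plays the role of $u_\ell$ in Lemma~\ref{lemma:turns}, part (b) of which gives $\turnValue{a,u}=2(\pi-\omega)$ and identifies $a$ with the arc endpoint $v$. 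Thus $v$ is narrow and $\Gamma_\ell$ has angular measure $2(\pi-\omega)$; the symmetric argument on $\Gamma_r$ handles $w$. Your use of Lemma~\ref{lemma:turns}(b) for the measures in the third paragraph is essentially the right tool, but it only becomes available once the narrowness of $u$ is actually established, which your argument never does.
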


\begin{proof}
Since $u$ is a hidden pivot,  $\Gamma_\ell$ and $\Gamma_r$ are supported by the same circle $C$. Consider the minimal $\omega$-wedge as its apex traverses $\Gamma_\ell$. Its arms are touching two vertices of $P$, both lying on $C$. 
Since $u$ is a hidden pivot,  the vertex touched by the left arm of the wedge is $u$ (otherwise, there would be no possibility to switch from $\Gamma_\ell$ to $\Gamma_r$ at point $u$). 
Let $a$ be the vertex of $P$ touched by the right arm of the wedge. When the apex of the wedge reaches $u$, the wedge becomes $W_\ell(u)$, and its right arm is passing through $a$ and $u$. If polygon $P$ had a vertex between $a$ and $u$, that vertex must lie outside $W_\ell$, which is impossible. Thus $a$ is actually $v$. Therefore, $\turnValue{v,u} = \turnValue{a,u}$, which by  
Lemma~\ref{lemma:turns}b equals $2(\pi-\omega)$. A symmetric argument for $u$, $v$, and $\Gamma_r$ completes the proof.
\end{proof}

We proceed with the main result of this section.
\begin{theorem}
\label{thm:uniqueness}
For a circular arc sequence  $\Omega$, there is at most 
 one convex polygon $P$ and one angle $\omega$ with $0 < \omega< \pi$ such that  $\Omega$ is the $\omega$-cloud of $P$. 
\end{theorem}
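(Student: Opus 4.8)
The plan is to show that the circular arc sequence $\Omega$, together with the turning behaviour of the minimal $\omega$-wedge along it, determines both $\omega$ and $P$ uniquely. First I would argue that $\omega$ is determined. Suppose $\Omega$ is the $\omega$-cloud of some convex $P$. By Corollary~\ref{cor:total-measure}, the sum $M$ of the angular measures of all arcs of $\Omega$ equals $2(2\pi - \sum_{v\in S}(\omega - \ang{v}))$, where $S$ is the set of narrow vertices. If $P$ has no strictly narrow vertices, then $M = 4\pi$, and moreover every arc has angular measure strictly less than $2(\pi-\omega)$ by Corollary~\ref{cor:max-measure} (with equality impossible in this regime since equality forces a narrow pivot via Lemma~\ref{lemma:turns}); conversely if there is a strictly narrow vertex, $M < 4\pi$. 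So from $M$ alone we can detect which case we are in. In the narrow case, I would use Lemma~\ref{lemma:narrow-char} and Observation~\ref{obs:narrow} to identify, purely from the combinatorics of the arc sequence and which supporting circles pass through which pivots, the set of narrow pivots; then for each narrow pivot $u$ the two flanking points $v,w$ with $\turnValue{v,u}=\turnValue{u,w}=2(\pi-\omega)$ are forced, and since $2(\pi-\omega)$ is the unique value making the geometry consistent (the wedge arm becomes collinear with the polygon edge at $u$, Lemma~\ref{lemma:turns}b), the quantity $\pi-\omega$, hence $\omega$, is pinned down. In the no-strictly-narrow case with $M=4\pi$ but some vertices exactly equal to $\omega$ (non-strict narrow, i.e.\ hidden pivots), Lemma~\ref{lemma:hidden-pivots} tells us each arc flanking a hidden pivot has measure exactly $2(\pi-\omega)$, again recovering $\omega$; and if there are truly no narrow vertices at all, I would need a separate argument — this is where the "non-constant" hypothesis implicitly enters — perhaps using that a generic arc with known endpoints and known supporting circle determines the wedge angle at an interior point by the inscribed-angle relation, so $\omega$ is the angle subtended, which is $\pi$ minus half the arc's angular measure.

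Once $\omega$ is fixed, I would reconstruct $P$ as follows. Each point $x$ in the interior of an arc $\Gamma$ has a unique minimal $\omega$-wedge $W(x)$ whose two arms are tangent to $P$ and touch the same pair of vertices $\{p_1,p_2\}$ of $P$ for all $x\in\Gamma$. I claim these two vertices are determined: $p_1$ and $p_2$ lie on the supporting circle of $\Gamma$ (since the apex $x$ sees segment $p_1p_2$ at a fixed angle related to $\omega$), and in fact by Lemma~\ref{lemma:normal-pivot} the two supporting circles of the arcs adjacent to a non-narrow pivot $u$ meet again at a point $x$ touched by an arm of $W(u)$ — so the second intersection points of consecutive supporting circles, together with the narrow pivots themselves (which by definition coincide with narrow vertices of $P$), give a candidate vertex set. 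I would then verify that these candidate vertices, taken in the cyclic order inherited from $\Omega$, are in convex position and that the polygon they bound has exactly $\Omega$ as its $\omega$-cloud; uniqueness follows because any two distinct convex polygons would have to produce the same arcs with the same supporting circles, forcing the same second-intersection points and the same narrow pivots, hence the same vertex set.

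The main obstacle I anticipate is the degenerate case where $P$ has \emph{no} narrow vertices at all and the $\omega$-cloud could a priori be "confused" with that of a different polygon at a different $\omega$ — essentially ruling out a one-parameter family of $(P,\omega)$ pairs with the same arc sequence. The resolution should be that fixing the arc sequence fixes each supporting circle and each arc's angular measure $\mu(\Gamma)$; then for an interior point $x$ of $\Gamma$, the inscribed-angle theorem forces the angle of $W(x)$ to satisfy a relation determined by $\mu(\Gamma)$ alone, and consistency across adjacent arcs (they must agree on $\omega$) plus Corollary~\ref{cor:total-measure} giving $\sum\mu(\Gamma)=4\pi$ leaves no freedom. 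A secondary technical point is handling hidden pivots carefully when reading off "which circles pass through which pivots," since a hidden pivot lies on a circle shared by both its neighbours; Lemma~\ref{lemma:hidden-pivots} and Observation~\ref{obs:narrow} are exactly the tools to disambiguate these, so I would lean on them rather than re-deriving the geometry. I expect the write-up to be a careful case analysis — narrow vs.\ non-narrow, hidden vs.\ visible pivots — with each case closed by one of the already-proved lemmas, and the only genuinely new argument being the $\omega$-recovery in the no-narrow-vertex case.
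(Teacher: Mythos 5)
There is a genuine gap, and it sits exactly where you flagged the difficulty: recovering $\omega$ when $P$ has no narrow vertices. Your proposed fix --- that for an interior point $x$ of an arc $\Gamma$ the inscribed-angle theorem gives $\omega = \pi - \tfrac{1}{2}\mu(\Gamma)$ --- is false. The two contact vertices $p_1,p_2$ of $P$ do lie on the supporting circle of $\Gamma$, and $\omega=\angle p_1xp_2$ is indeed half the central angle of the arc $p_1p_2$ \emph{not} containing $x$; but $\mu(\Gamma)$ is the central angle of the (generally much shorter) sub-arc actually traced by the apex, which only yields the inequality $\mu(\Gamma)\le 2(\pi-\omega)$ of Corollary~\ref{cor:max-measure}, with equality precisely when the endpoints of $\Gamma$ are narrow pivots --- i.e.\ never in the case you are trying to handle. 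So the arc measures alone give only a lower bound on $\omega$, not its value. The paper closes this case by a different argument: assuming two angles $\omega<\omega'$ both work, Lemma~\ref{lemma:turns}c forces the arms of both wedges at a pivot $u$ to pass through the points of $\Omega$ at total angular measure $2(\pi-\omega)$ resp.\ $2(\pi-\omega')$ from $u$, so one wedge strictly contains the other; Lemma~\ref{lemma:normal-pivot} forces both to touch the second intersection point of the two supporting circles meeting at $u$, which is only possible if an arm is shared, and a shared arm forces a narrow pivot --- contradiction. That argument is not a refinement of yours; it is the missing idea.

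Your treatment of the strictly-narrow case also has a circularity: identifying the narrow pivots via Lemma~\ref{lemma:narrow-char} requires already knowing $\omega$ (the flanking points $v,w$ are \emph{defined} by $\turnValue{v,u}=\turnValue{u,w}=2(\pi-\omega)$), and the assertion that ``$2(\pi-\omega)$ is the unique value making the geometry consistent'' is precisely the statement to be proved, not a consequence of Lemma~\ref{lemma:turns}b. The paper avoids this by arguing by contradiction with two candidate angles $\omega<\omega'$: it first disposes of the subcase where \emph{every} pivot is narrow (which your sketch does not address --- there Observation~\ref{obs:narrow} is inapplicable since no narrow pivot is followed by a non-narrow one), and otherwise picks a narrow pivot $u$ followed by a non-narrow one, where Lemma~\ref{lemma:narrow-char}a for $\omega$ forces the supporting circle of the arc just past the $\omega'$-endpoint to pass through $u$, while Observation~\ref{obs:narrow} for $\omega'$ forbids it. Your second stage --- reconstructing $P$ once $\omega$ is fixed, via the narrow pivots, Lemma~\ref{lemma:turns}, and intersecting the wedge with the supporting circle --- is essentially the paper's argument and is fine; the convex-position verification step you add is unnecessary for uniqueness.
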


\begin{proof}

We prove that the angle $\omega$ is unique. Suppose for the sake of contradiction that it is not the case, i.e., there are two distinct angles $\omega, \omega'$, for which 
$\Omega$ is the $\omega$-cloud and the $\omega'$-cloud of some polygons.   Assume without loss of generality that $\omega < \omega'$. 

Suppose first that the total angular measure of $\Omega$ is  $4\pi$. By Corollary~\ref{cor:total-measure}, it does not have any narrow pivots. 
Fix any pivot $u$ of $\Omega$. Lemma~\ref{lemma:turns}c  uniquely determines the placement of the minimal $\omega$- and $\omega'$-wedge with the apex at point $u$: 
 The two arms of the $\omega$-wedge must pass through the two points on $\Omega$ at distance $2(\pi-\omega)$ from $u$; analogously for the $\omega'$. By this construction, the  $\omega'$-wedge must enclose the  $\omega$-wedge. 
But both wedges must touch the intersection point of the two supporting circles of the arcs incident to $u$ due to Lemma~\ref{lemma:normal-pivot}.
This is possible only if the left or the right arms of the two wedges coincide. But then by Lemma~\ref{lemma:turns}c, as applied to the $\omega$-wedge, the point on $\Omega$ at the distance $2(\pi-\omega)$ is a narrow pivot of $\Omega$. This contradicts our assumption. 

Suppose now that the total angular measure of $\Omega$ is less than $4\pi$. By Corollary~\ref{cor:total-measure}, there are narrow pivots in $\Omega$. Moreover, if a pivot is narrow for $\omega$, it must be narrow for $\omega'$.
If all the pivots in $\Omega$ are narrow for $\omega$, they are also all narrow for $\omega'$, moreover, they are exactly the vertices of $P$, but then the same polygon $P$ has the $\omega$-cloud and $\omega'$-cloud coinciding with $\Omega$, which contradicts the definition of the $\omega$-cloud. Therefore, not all the pivots of $\Omega$ are narrow for $\omega$. 

 Let $u$  be a narrow pivot for $\omega$ followed by a non-narrow pivot for $\omega$; it exists because of our above observation that not all pivots of $\Omega$ are narrow for $\omega$. Let $u, w, u', w'$ be the points of $\Omega$ with $\turnValue{u,v} = \turnValue{v,w} = 2(\pi-\omega)$ and $\turnValue{u',v} = \turnValue{v,w'} = 
2(\pi-\omega')$.  
By Lemma~\ref{lemma:narrow-char}, the supporting circles of  all the arcs
in $\Omega_{uw}$ pass through $u$. Since $\omega < \omega'$, point $w'$ lies in the interior of $\Omega_{uw}$. Thus the arc $\Gamma$ that follows $w'$ must have the supporting circle passing through $u$.  
But by Observation~\ref{obs:narrow}, the supporting circle of $\Gamma$ does not pass through $u$. 
This is a contradiction.

We showed that the angle $\omega$ is unique. We now show that the polygon $P$, for which $\Omega$ is the $\omega$-cloud, is unique as well. 
Observe that $\Omega$ must have at least two arcs.
 Lemmas~\ref{lemma:narrow-char} and~\ref{lemma:hidden-pivots} uniquely identify the narrow pivots of $\Omega$, which are the narrow vertices of $P$ (including hidden narrow pivots).
 By Lemma~\ref{lemma:turns}a, the portion of $\Omega$ between any two narrow pivots has total angular measure at least 
$2(\pi-\omega)$, and thus the components of $P$ as defined by excluding all narrow vertices  are uniquely determined by Lemma~\ref{lemma:turns}b. In particular, for each such component  Lemma~\ref{lemma:turns}b gives the minimal $\omega$-wedge $W$ with the apex at some point $x$ in that component. Let $\Gamma$ be an arc that contains or is incident to $x$, and $C$ be the supporting circle of $\Gamma$. Intersecting $W$ with $C$ gives the two vertices of $P$, that are tangent to the arms of the $\omega$-wedge as its apex traverses $\Gamma$. 
\end{proof}

\subsection{Maximal $\omega$-cloud}
\label{subsec:max-cloud}
Given the $\omega$-cloud $\Omega$ of $P$, let the \emph{ maximal $\omega$-cloud} of $P$, denoted $\maxOmega$, be the result of merging all the pairs of consecutive arcs in $\Omega$ that have same supporting circle; equivalently $\maxOmega$ can be seen as the result of removing all the hidden pivots from $\Omega$. 

The maximal $\omega$-cloud is a natural modification of the $\omega$-cloud for the reconstruction task, see Section~\ref{sec:reconstr}; it reflects the situation when as an input we are given a locus of the apices of all the minimal $\omega$-wedges without any additional information, 
i.e., without the coordinates of the hidden pivots. 
The following  two statements are corollaries of Lemma~\ref{lemma:hidden-pivots}.

\begin{lemma}
\label{cor:circular-cloud}
The maximal $\omega$-cloud of $P$ is a circle $C$ if and only if
$k = \pi/(\pi-\omega)$ is an integer and $P$ is a regular $k$-gon inscribed in $C$.  
\end{lemma}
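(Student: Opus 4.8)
The plan is to prove the two implications separately, using as main tools Lemma~\ref{lemma:hidden-pivots}, the angular-measure bookkeeping of Lemma~\ref{lemma:ang-measure} and Corollary~\ref{cor:total-measure}, and the inscribed angle theorem.

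\emph{Only if.} Suppose $\maxOmega$ is a circle $C$. Since $\maxOmega$ arises from $\Omega$ by merging consecutive arcs that share a supporting circle, the result being a single circle forces every arc of $\Omega$ to lie on $C$ and every pivot of $\Omega$ to be hidden. By Lemma~\ref{lemma:hidden-pivots}, every pivot of $\Omega$ is then a narrow vertex of $P$, and every arc of $\Omega$ has angular measure $2(\pi-\omega)$. Now count pivots: distinct pivots are distinct narrow vertices, so there are at most $n$ of them, while there are at least $n$ (the number of pivots, hidden ones included, is at least $n$); hence there are exactly $n$ pivots, they are exactly the $n$ vertices of $P$, and \emph{every} vertex of $P$ is narrow and lies on $C$. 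So $P$ is inscribed in $C$; write its vertices in cyclic order as $p_1,\dots,p_n$ and let $C_i$ be the arc of $C$ from $p_i$ to $p_{i+1}$ (indices mod $n$) containing no other vertex. Since the pivots of $\Omega$ are precisely $p_1,\dots,p_n$ and no arc of $\Omega$ carries a pivot in its interior, the arc of $\Omega$ through any interior point of $C_i$ has endpoints $p_i$ and $p_{i+1}$, i.e.\ it equals $C_i$; thus $C_i$ is an arc of $\Omega$, and its central angle equals its angular measure $2(\pi-\omega)$. Because the $C_i$ tile $C$, summing their central angles gives $2n(\pi-\omega)=2\pi$, so $n=\pi/(\pi-\omega)=k$ is an integer; and since all $C_i$ have the same central angle $2\pi/n$, $P$ is the regular $n$-gon inscribed in $C$.

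\emph{If.} Let $P$ be a regular $k$-gon inscribed in $C$ with $k=\pi/(\pi-\omega)\in\mathbb{Z}$; then $\omega=\pi(k-1)/k$, the internal angle $\pi(k-2)/k$ of $P$ is strictly smaller than $\omega$ (so every vertex is strictly narrow), and consecutive vertices $p_i,p_{i+1}$ subtend the central angle $2\pi/k=2(\pi-\omega)$. For a point $q$ in the open arc $C_i$ of $C$ between $p_i$ and $p_{i+1}$ containing no other vertex, let $W$ be the wedge with apex $q$ whose arms are the rays from $q$ through $p_i$ and through $p_{i+1}$. The inscribed angle theorem, applied in $C$, gives that $W$ has opening exactly $\omega$; the position of $q$ on $C_i$ makes both arms supporting lines of $P$ (at $p_i$ and at $p_{i+1}$) and puts all of $P$ inside $W$. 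Hence $W$ is a minimal $\omega$-wedge with apex $q$, so $C_i\subseteq\Omega$ with a single combinatorial type, and therefore $C_i$ is an arc of $\Omega$ of angular measure $2\pi/k$; together with the vertices (strictly narrow, hence pivots of $\Omega$) this gives $C\subseteq\Omega$. Finally, since all $k$ vertices of $P$ are narrow, Corollary~\ref{cor:total-measure} says the total angular measure of the arcs of $\Omega$ is $2\bigl(2\pi-\sum_v(\omega-\ang{v})\bigr)=2(2\pi-\pi)=2\pi$, which is already exhausted by the $k$ arcs $C_i$; hence $\Omega$ has no further arcs, $\Omega=\bigcup_i C_i=C$, and $\maxOmega=C$.

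The step I expect to be most delicate is, in the ``only if'' direction, establishing that every vertex of $P$ — not merely every pivot — is a narrow pivot, since it is this identification of the $C_i$ with arcs of $\Omega$ that drives the angle count; I intend to get it purely from the pivot-count inequality $n\le\#\{\text{pivots}\}\le n$. A secondary point of care is the ``if'' direction's assertion that $\Omega$ has no further arcs (for instance from minimal $\omega$-wedges with an arm flush with an edge, or touching non-adjacent vertices): instead of classifying all minimal $\omega$-wedges, I plan to close that gap with the global angular-measure identity of Corollary~\ref{cor:total-measure}.
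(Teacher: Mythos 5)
Your proof is correct and follows essentially the same route as the paper's: in the ``only if'' direction you observe that every pivot is hidden, invoke Lemma~\ref{lemma:hidden-pivots} to make every pivot a narrow vertex with incident arcs of measure $2(\pi-\omega)$, and then count. You are in fact more careful than the paper at the two places it is terse --- you justify that the pivots are \emph{exactly} the vertices via the $n\le\#\{\text{pivots}\}\le 2n$ bound, and you give a genuine argument (inscribed angle theorem plus the total-measure identity of Corollary~\ref{cor:total-measure}) for the converse, which the paper dismisses as immediate from the definition.
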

\begin{proof}
Suppose the maximal $\omega$-cloud $\maxOmega$ of $P$ is a circle $C$. Since all arcs of 
$\maxOmega$ of $P$ are supported by the circle $C$, all the pivots of $\Omega$ are hidden. Applying Lemma~\ref{lemma:hidden-pivots} to each pivot, we obtain that the pivots of $\Omega$ are exactly the vertices of $P$, and each arc has measure $2(\pi-\omega)$. Therefore the number of vertices of $P$ is $2\pi/(2(\pi-\omega))$, and the claim follows.
The other direction of the claim directly follows from 
the definition of an  $\omega$-cloud. 
\end{proof}

\begin{lemma} 
\label{lemma:concat}
 An arc of $\maxOmega$ is greater than $2(\pi-\omega)$ if and only if this arc is a concatenation of
 at least two  co-circular arcs of $\Omega$ separated by hidden pivots, each of measure $2(\pi-\omega)$.
\end{lemma}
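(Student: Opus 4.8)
The plan is to prove the two directions separately, using only Corollary~\ref{cor:max-measure} (every arc of $\Omega$ has angular measure at most $2(\pi-\omega)$) and Lemma~\ref{lemma:hidden-pivots} (the two arcs incident to a hidden pivot each have angular measure exactly $2(\pi-\omega)$, and their shared and far endpoints are all narrow).

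For the ``if'' direction I would argue directly. Suppose an arc $A$ of $\maxOmega$ is the concatenation of $m \ge 2$ co-circular arcs $\Gamma_1, \dots, \Gamma_m$ of $\Omega$ separated by hidden pivots, each $\Gamma_i$ of angular measure $2(\pi-\omega)$. Then the angular measure of $A$ is $m \cdot 2(\pi-\omega) \ge 4(\pi-\omega) > 2(\pi-\omega)$, since $0 < \omega < \pi$ forces $\pi - \omega > 0$. This is immediate and needs no further work.

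For the ``only if'' direction, suppose $A$ is an arc of $\maxOmega$ with angular measure strictly greater than $2(\pi-\omega)$. First, by Corollary~\ref{cor:max-measure} a single arc of $\Omega$ has angular measure at most $2(\pi-\omega)$, so $A$ is not a single arc of $\Omega$. By the definition of $\maxOmega$ as the result of merging maximal runs of consecutive co-circular arcs of $\Omega$ (equivalently, of deleting all hidden pivots), $A$ is therefore the concatenation $\Gamma_1\Gamma_2\cdots\Gamma_m$ of $m \ge 2$ arcs of $\Omega$, all lying on a common supporting circle, and the $m-1$ pivots separating consecutive $\Gamma_i$ are precisely hidden pivots. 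To pin down the measures, I would apply Lemma~\ref{lemma:hidden-pivots} to each hidden pivot $u_i$ (the one separating $\Gamma_i$ and $\Gamma_{i+1}$): its two incident arcs are exactly $\Gamma_i$ and $\Gamma_{i+1}$, so both have angular measure $2(\pi-\omega)$. Letting $i$ range over $1, \dots, m-1$ covers every constituent arc $\Gamma_1, \dots, \Gamma_m$, so each has angular measure $2(\pi-\omega)$, which is the desired conclusion.

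I do not expect a genuine obstacle. The only place that calls for a bit of care is the bookkeeping in the ``only if'' direction: one must first justify that an arc of $\maxOmega$ exceeding $2(\pi-\omega)$ necessarily decomposes into co-circular $\Omega$-arcs glued at hidden pivots (which is exactly how $\maxOmega$ is constructed), and then chain Lemma~\ref{lemma:hidden-pivots} across \emph{all} the separating hidden pivots, rather than just invoking it at one pivot, so that every constituent arc is shown to have measure $2(\pi-\omega)$.
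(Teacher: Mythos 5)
Your proof is correct and follows the same route the paper intends: the paper presents this lemma as a direct corollary of Lemma~\ref{lemma:hidden-pivots}, which is exactly the tool you chain across the hidden pivots, together with the obvious appeal to Corollary~\ref{cor:max-measure} and the definition of $\maxOmega$ for the decomposition.
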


Lemmas~\ref{cor:circular-cloud} and~\ref{lemma:concat} 
together with Theorem~\ref{thm:uniqueness} imply a similar uniqueness result for  $\maxOmega$: 
 
\begin{theorem}
\label{thm:uniqueness-max}
 Given an angle $\omega$, and a circular arc sequence  $\maxOmega$ of at least two arcs, there is at most 
 one convex polygon $P$ such that  $\maxOmega$ is the maximal $\omega$-cloud of $P$. 
\end{theorem}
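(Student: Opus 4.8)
## Proof Proposal for Theorem~\ref{thm:uniqueness-max}

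The plan is to reduce the uniqueness statement for the maximal $\omega$-cloud to the already-established uniqueness for the (ordinary) $\omega$-cloud, namely Theorem~\ref{thm:uniqueness}. Since $\omega$ is now given as part of the input, I do not need to recover $\omega$; I only need to show that the polygon $P$ is determined. The key observation is that $\maxOmega$, together with the known value of $\omega$, determines $\Omega$ completely: once I can reinsert all the hidden pivots in their correct positions, I obtain the full circular arc sequence $\Omega$, and then Theorem~\ref{thm:uniqueness} finishes the job (the polygon reconstructed from $\Omega$ is unique, and in particular independent of any further ambiguity).

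The main work is therefore to reconstruct $\Omega$ from $\maxOmega$ and $\omega$. By definition, $\maxOmega$ is obtained from $\Omega$ by merging every maximal run of consecutive co-circular arcs into a single arc. So an arc $\Gamma$ of $\maxOmega$ either (a) is already an arc of $\Omega$, or (b) is a concatenation of several co-circular arcs of $\Omega$ separated by hidden pivots. By Corollary~\ref{cor:max-measure}, every genuine arc of $\Omega$ has angular measure at most $2(\pi-\omega)$, while by Lemma~\ref{lemma:concat} an arc of $\maxOmega$ falls into case (b) precisely when its angular measure exceeds $2(\pi-\omega)$, and in that case it is a concatenation of $m \ge 2$ sub-arcs each of measure exactly $2(\pi-\omega)$. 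Hence, for each arc $\Gamma$ of $\maxOmega$ of angular measure $\mu$: if $\mu \le 2(\pi-\omega)$, leave it alone; if $\mu > 2(\pi-\omega)$, then necessarily $\mu = m\cdot 2(\pi-\omega)$ for the integer $m = \mu / (2(\pi-\omega)) \ge 2$, and I subdivide $\Gamma$ into $m$ consecutive sub-arcs of equal angular measure $2(\pi-\omega)$ along its supporting circle, inserting $m-1$ hidden pivots at the subdivision points. This recovers $\Omega$ exactly, and the construction is forced at every step, so $\Omega$ is uniquely determined by $\maxOmega$ and $\omega$.

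One degenerate case needs separate handling: if $\maxOmega$ is a single circle $C$ (a circular arc sequence with effectively one ``arc'' spanning $2\pi$), the above subdivision recipe still applies — $2\pi$ is an integer multiple of $2(\pi-\omega)$ exactly when $k = \pi/(\pi-\omega)$ is an integer — and Lemma~\ref{cor:circular-cloud} tells us $P$ is then the regular $k$-gon inscribed in $C$, which is unique. But the hypothesis of the theorem requires $\maxOmega$ to have at least two arcs, so if $\maxOmega$ is genuinely a single circle there is nothing new; if it has at least two arcs, none of them is the full circle and the subdivision argument runs without issue. Also note that, a priori, if $P$ is a regular $k$-gon its full $\omega$-cloud $\Omega$ also consists of $k \ge 2$ arcs on $C$, and Theorem~\ref{thm:uniqueness} applied to this $\Omega$ still pins down $P$; consistency of the two descriptions is guaranteed by Lemmas~\ref{cor:circular-cloud} and~\ref{lemma:concat}.

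The step I expect to be the only real obstacle — and it is minor — is justifying that the subdivision in case (b) is forced, i.e., that the hidden pivots of $\Omega$ are distributed so that each co-circular sub-arc has measure exactly $2(\pi-\omega)$ and they are therefore equally spaced on the supporting circle. This is exactly the content of Lemma~\ref{lemma:hidden-pivots} (each arc incident to a hidden pivot has measure $2(\pi-\omega)$) combined with Lemma~\ref{lemma:concat}, so no new argument is required; one simply chains these facts. Everything else is bookkeeping: once $\Omega$ is reconstructed uniquely, apply Theorem~\ref{thm:uniqueness} to conclude that the convex polygon $P$ with $\omega$-cloud $\Omega$ is unique, hence the convex polygon with maximal $\omega$-cloud $\maxOmega$ (for the given $\omega$) is unique as well. $\qed$
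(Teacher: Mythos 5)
Your proposal is correct and follows essentially the same route as the paper, which states Theorem~\ref{thm:uniqueness-max} as a direct consequence of Lemmas~\ref{cor:circular-cloud} and~\ref{lemma:concat} together with Theorem~\ref{thm:uniqueness}: recover the hidden pivots by subdividing every arc of $\maxOmega$ of measure exceeding $2(\pi-\omega)$ into equal sub-arcs of measure $2(\pi-\omega)$, thereby reconstructing $\Omega$ uniquely, and then invoke the uniqueness of $P$ given $\Omega$. Your write-up simply makes explicit the chaining of those lemmas that the paper leaves implicit.
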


Another uniqueness result, which is useful
for our reconstruction algorithm, follows from   Theorem~\ref{thm:uniqueness} and the
fact that for $\omega < \pi/2$ the $\omega$-cloud and the maximal $\omega$-cloud coincide. 
  
\begin{corollary} [of Theorem~\ref{thm:uniqueness}]
For a circular arc sequence  $\Omega$, there is at most 
 one convex polygon $P$ and one angle 
 $\omega< \pi/2$ such that  $\Omega$ is the maximal $\omega$-cloud of $P$.
\end{corollary}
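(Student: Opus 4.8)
The plan is to reduce the statement to Theorem~\ref{thm:uniqueness} via the observation that, for $\omega<\pi/2$, the $\omega$-cloud and the maximal $\omega$-cloud of a convex polygon are literally the \emph{same} circular arc sequence, not merely the same curve. Granting that, the corollary is immediate: if a circular arc sequence $\Omega$ were the maximal $\omega_1$-cloud of a convex polygon $P_1$ and also the maximal $\omega_2$-cloud of a convex polygon $P_2$, with $\omega_1,\omega_2<\pi/2$, then $\Omega$ would simultaneously be the $\omega_1$-cloud of $P_1$ and the $\omega_2$-cloud of $P_2$; since $0<\omega_i<\pi$, Theorem~\ref{thm:uniqueness} forces $P_1=P_2$ and $\omega_1=\omega_2$.

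So the whole argument rests on one claim: \emph{if $\omega<\pi/2$, then the $\omega$-cloud of any convex polygon $P$ has no hidden pivots}, and hence, by the definition of $\maxOmega$ as the result of deleting the hidden pivots (equivalently, merging consecutive arcs with a common supporting circle), we have $\maxOmega=\Omega$ together with its decomposition into arcs. To prove the claim I would argue by contradiction: suppose $u$ is a hidden pivot, with incident arcs $\Gamma_\ell$ and $\Gamma_r$. By Lemma~\ref{lemma:hidden-pivots}, $\Gamma_\ell$ and $\Gamma_r$ lie on a common supporting circle $C$ and each has angular measure exactly $2(\pi-\omega)$. Since $\Gamma_\ell$ and $\Gamma_r$ are consecutive arcs of the simple closed curve $\Omega$, meeting only at the endpoint $u$, together they occupy an arc of $C$ whose total central angle is at most $2\pi$; hence $2\cdot 2(\pi-\omega)\le 2\pi$, i.e.\ $\omega\ge\pi/2$, contradicting $\omega<\pi/2$.

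The only delicate point, and the main obstacle, is this last step — specifically the assertion that two consecutive, co-circular arcs of $\Omega$ cannot jointly wrap around their circle by more than $2\pi$. This is where one uses that $\Omega$ is a simple closed curve (equivalently, that a point of the plane is the apex of essentially one minimal $\omega$-wedge, which is what makes ``the region bounded by $\Omega$'' meaningful): if the combined central angle of $\Gamma_\ell\cup\Gamma_r$ exceeded $2\pi$, the sub-path $\Gamma_\ell\cup\Gamma_r$ would revisit a point of $C$, which a simple curve cannot do. Everything else — the reduction to Theorem~\ref{thm:uniqueness}, and the fact that the absence of hidden pivots means no merging takes place so that $\maxOmega$ and $\Omega$ coincide as circular arc sequences — is purely formal.
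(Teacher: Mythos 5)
Your proof is correct and follows essentially the same route as the paper: reduce to Theorem~\ref{thm:uniqueness} using the fact that for $\omega<\pi/2$ the $\omega$-cloud has no hidden pivots, so $\Omega$ and $\maxOmega$ coincide as circular arc sequences. The only difference is that you actually justify that fact (via Lemma~\ref{lemma:hidden-pivots}: two consecutive co-circular arcs of measure $2(\pi-\omega)$ each would wrap more than $2\pi$ around their circle), whereas the paper simply asserts it.
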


We now show that the condition that $\omega< \pi/2$ is necessary for the above statement. 

\begin{proposition}
\label{prop:non-unique}
There is a circular arc sequence such that it is the maximal $\omega$-cloud of $P$ and the maximal $\omega'$-cloud of $P'$ for distinct  angles $\omega, \omega'$ and polygons $P,P'$. 
\end{proposition}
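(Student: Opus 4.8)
The plan is to produce a single circle that is simultaneously the maximal $\omega$-cloud of one regular polygon and the maximal $\omega'$-cloud of another, so that the whole argument rests on Lemma~\ref{cor:circular-cloud}. That lemma says that a circle $C$ is the maximal $\omega$-cloud of $P$ exactly when $k=\pi/(\pi-\omega)$ is a positive integer and $P$ is the regular $k$-gon inscribed in $C$; read in the other direction, for every integer $k\ge 3$ the regular $k$-gon inscribed in a fixed circle $C$ has $C$ itself as its maximal $\omega_k$-cloud, where $\omega_k=\pi(1-1/k)\in(0,\pi)$. Distinct values of $k$ give distinct angles $\omega_k$ and non-congruent (hence certainly distinct) polygons, so any two admissible values of $k$ furnish the example required by the proposition.

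Concretely, I would fix a circle $C$, take $P$ to be the equilateral triangle inscribed in $C$ with $\omega=\omega_3=2\pi/3$, and take $P'$ to be the square inscribed in $C$ with $\omega'=\omega_4=3\pi/4$. Then $0<\omega<\omega'<\pi$, the two polygons are distinct, and by the easy direction of Lemma~\ref{cor:circular-cloud} — which is immediate from the definition of the $\omega$-cloud — the circle $C$, regarded as a (degenerate, single full circle) circular arc sequence, is both the maximal $\omega$-cloud of $P$ and the maximal $\omega'$-cloud of $P'$. This is exactly the statement.

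The construction is direct, so there is no real obstacle; the only thing to verify is the claim that a regular $k$-gon inscribed in $C$ has $C$ as its $\omega_k$-cloud, and this is a one-line application of the inscribed angle theorem: each side of the polygon cuts off a minor arc of $C$ of central angle $2\pi/k$, that arc contains no other vertex, and the side subtends the angle $\pi-\pi/k=\omega_k$ from every point of that minor arc; hence sliding the minimal $\omega_k$-wedge whose two arms pass through a fixed pair of adjacent vertices sweeps its apex precisely along that minor arc, and as the wedge rotates fully around the polygon these arcs cover all of $C$ (so the cloud, and therefore the maximal cloud, equals $C$).

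I would also note that this example is consistent with — and indeed shows the necessity of — the hypothesis $\omega<\pi/2$ in the preceding corollary, since $\omega_k>\pi/2$ for every $k\ge 3$; and if a non-degenerate example is desired, one can instead invoke Lemma~\ref{lemma:concat} to reinterpret an arc of angular measure $2m(\pi-\omega)$ of a maximal $\omega$-cloud as a concatenation of $m'>m$ co-circular arcs of measure $2(\pi-\omega')$ with $\omega'=\pi-\tfrac{m}{m'}(\pi-\omega)>\omega$, producing a distinct polygon $P'$ with the same maximal cloud.
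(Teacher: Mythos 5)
Your construction is correct and does establish the proposition as literally stated, but it takes a genuinely different route from the paper. You take the degenerate case where the whole arc sequence is a single circle and invoke the easy direction of Lemma~\ref{cor:circular-cloud}: the inscribed equilateral triangle with $\omega=2\pi/3$ and the inscribed square with $\omega'=3\pi/4$ both have the circle as their maximal cloud, and your inscribed-angle verification of this is sound. The paper instead exhibits a non-degenerate three-arc sequence with pairwise distinct supporting circles: it is the maximal $2\pi/3$-cloud of a (non-equilateral) triangle $ace$ and, after placing the arc midpoints $b,d,f$, also the maximal $5\pi/6$-cloud of the hexagon $abcdef$. The trade-off is this: your example is shorter and needs no computation, but it lives entirely in the single-circle case that the paper explicitly sets aside elsewhere (Theorem~\ref{thm:uniqueness-max} and both reconstruction algorithms assume at least two arcs, precisely because a circle never determines $P$ even when $\omega$ is known). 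The paper's example is stronger in that it shows ambiguity persists for multi-arc inputs, i.e., exactly in the regime the $\omega$-oblivious algorithm of Section~\ref{sec:reconstr-without-w} must handle, so it genuinely justifies the hypothesis $\omega<\pi/2$ there. Your closing remark about using Lemma~\ref{lemma:concat} to subdivide arcs of a multi-arc cloud is essentially the paper's construction in general form, but as written it is only a sketch: to make it a proof one must verify that the subdivision points together with the old vertices form a convex polygon all of whose vertices are narrow for the new angle $\omega'$ and whose $\omega'$-cloud is the original sequence, which is what the paper's explicit angle choices accomplish.
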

\begin{proof}
Let $P$ be a triangle with vertices $a,c,e$,  such that $\angle ace = \pi/3$,  $\angle cea = 16/45\pi$, 
and $\angle cae = 14/45\pi$. See Figure~\ref{fig:j-lous}a.  Let $\omega = 2/3\pi$, and let $\Omega$ be 
the $\omega$-cloud of $P$; it consists of three circular arcs $\Gamma_0, \Gamma_1,$ and $\Gamma_2$; note that the supporting circles of these arcs are pairwise distinct. Since $\Omega$ has no hidden pivots, $\Omega$ is also the maximal $\omega$-cloud of $P$. Let $b$ (resp.,
$d$ and $f$) be the midpoint of $\Gamma_0$ (resp.,  of $\Gamma_1$ and of $\Gamma_2$), and let $P'$ be the hexagon $abcdef$. 
Then $\angle abc = \angle cde =  \angle efa =  \angle bcd = 2/3 \pi$, $ \angle fab = 29/45 \pi$, and  $ \angle fab = 29/45 \pi$.
Let $\omega' = 5/6\pi$, and note that all the angles at the vertices of $P'$ are smaller than $\omega'$. 
It is easy to verify that $\Omega$ is the maximal $\omega'$-cloud of $P'$. 
\end{proof}

\begin{figure}
\begin{minipage}{0.49\linewidth}
\centering 
\includegraphics{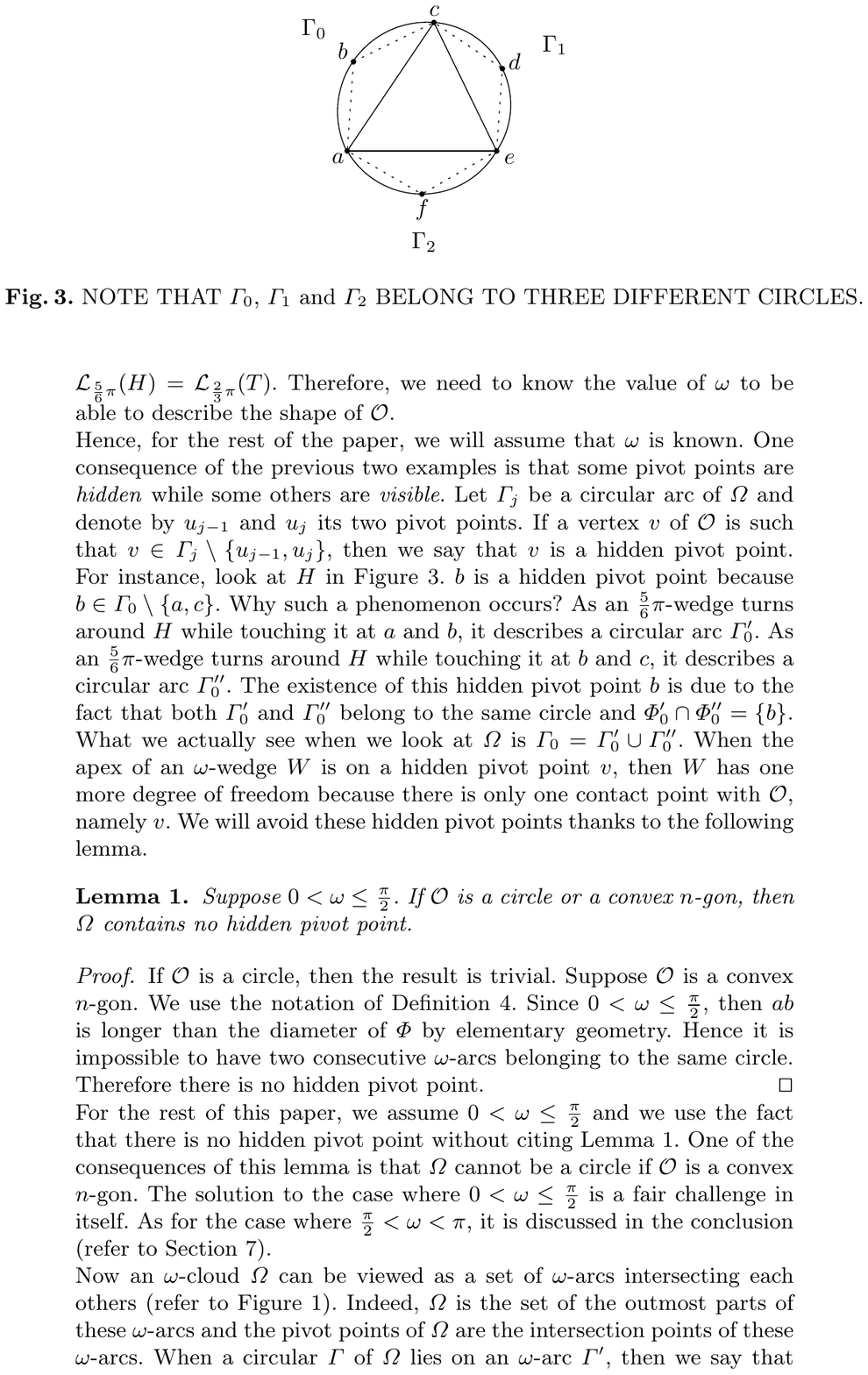}
\\
(a)
\end{minipage}
\begin{minipage}{0.49\linewidth}
\centering 
\includegraphics[page=5]{narrow-pivot}
\\
(b)
\end{minipage}
\caption{(a) A circular arc sequence of $3$ arcs, that is the maximal $2/3\pi-$cloud for triangle $ace$ and the maximal $5/6-$cloud for hexagon $abcdef$. (b) Illustration for the proof of Lemma~\ref{lemma:reconstr-no-narrow}.}
\label{fig:j-lous}

\end{figure}

\begin{remark}
The example in the proof of Proposition~\ref{prop:non-unique} can be continued by subdividing each arc in half and considering $11/12\pi$-cloud for the obtained $12$-gon, and so on.
\end{remark}

\section{Reconstructing $P$ from its maximal \texorpdfstring{$\omega$}{omega}-cloud}\label{sec:reconstr}
In this section we let 
$\maxOmega$ be a circular arc 
 sequence.  
Our goal is to reconstruct the convex polygon $P$ for which $\maxOmega$ is the maximal $\omega$-cloud for some angle $\omega$, or determine that no such polygon exists.

Note that if $\maxOmega$ is a single arc, i.e., it is a circle $C$, then $P$ is not unique. By Lemma~\ref{cor:circular-cloud}, it is a regular $\pi/(\pi-\omega)$-gon inscribed in $C$. The position of its vertices on $C$ is impossible to identify given only $\maxOmega$ and $\omega$. 
Therefore we assume that $\maxOmega$ has at least two arcs.

First, in Section~\ref{sec:reconstr-given-w}, we consider $\omega$ to be given, and such that $0 < \omega < \pi$; in this case the maximal $\omega$-cloud of $P$ may differ from its $\omega$-cloud.  
Afterwards, in Section~\ref{sec:reconstr-without-w}, we consider the setting where $\omega$ is not known, but it is known that  $0 < \omega < \pi/2$; in this case the two variations of the $\omega$-cloud coincide, and we use $\Omega$ instead of $\Omega^*$ to denote the input.

\subsection{ An  \texorpdfstring{$\omega$}{omega}-aware reconstruction algorithm}
\label{sec:reconstr-given-w}

Here we assume that $\omega$ is given, and that $0 < \omega < \pi$. 
The main difficulty in the reconstruction task is caused by  strictly narrow vertices, as the turn of the minimal $\omega$-wedge at those vertices is not reflected in the $\omega$-cloud, see Corollary~\ref{cor:total-measure}.
In our reconstruction algorithm, we first find all the non-hidden
strictly narrow pivots of $\maxOmega$, and then  
treat the connected portions of $\maxOmega$ between those pivots separately. 
Lemma~\ref{lemma:reconstr-no-narrow} gives a procedure to process such a portion. 

For two points $u$ and $v$ on $\maxOmega$, let $P_{uv}$ be the union of the edges and vertices of $P$ touched by the arms of the minimal $\omega$-wedge as its apex traverses $\Omega^*_{uv}$. Since $\Omega^*_{uv}$ is connected, when the apex of the wedge traverses $\Omega^*_{uv}$, for each of its two arms, the point of contact with $P$ traverses a connected portion of $P$; thus  $P_{uv}$ consists of at most two connected portions of $P$. Note that it is possible that one of the portions is a single vertex.

\begin{lemma}
\label{lemma:reconstr-no-narrow}
Given a portion  $\Omega^*_{uv}$ of $\maxOmega$ that 
does not contain any strictly narrow pivots,  
and the direction $\dirr{u}$ of the rightmost minimal $\omega$-wedge $W_r(u)$ of $u$,  
the portion $P_{uv}$ of $P$ that corresponds to $\Omega^*_{uv}$  
can be reconstructed 
in time linear in the number 
of arcs in $\Omega^*_{uv}$. The procedure requires $O(1)$ working space in addition to the input.  
\end{lemma}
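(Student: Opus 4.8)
The plan is to walk the apex along $\Omega^*_{uv}$ arc by arc, reconstructing one vertex of $P$ each time the supporting circle changes. We are given $\dirr{u}$, hence the rightmost minimal $\omega$-wedge $W_r(u)$ at $u$; since $\Omega^*_{uv}$ contains no strictly narrow pivot, every pivot we encounter is either non-narrow or exactly $\omega$-narrow, and in either case Lemma~\ref{lemma:ang-measure} tells us the direction of the $\omega$-wedge changes continuously across pivots (with no jump). So, starting from $W_r(u)$, we maintain the direction $\dir{x}$ of the wedge as $x$ traverses the current arc: after traversing an arc of angular measure $\mu$ the direction turns by exactly $\mu/2$ by Lemma~\ref{lemma:ang-measure}. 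Given the arc's supporting circle $C$ (read from the input) and the wedge direction at its two endpoints, the two contact vertices of $P$ corresponding to that arc are obtained by intersecting the two arms of the $\omega$-wedge with $C$, as in the last paragraph of the proof of Theorem~\ref{thm:uniqueness} and in Lemma~\ref{lemma:normal-pivot}.

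The key structural observation is that consecutive arcs of $\Omega^*_{uv}$ share exactly one contact vertex of $P$: when the apex passes a (non-hidden) pivot $q$, one arm stays pinned to a vertex of $P$ while the other arm pivots, so the "trailing" contact vertex is common to the arc before $q$ and the arc after $q$. (If $q$ happens to be a non-strictly-narrow vertex of $P$, it is itself that shared contact point, by Lemma~\ref{lemma:normal-pivot} / the characterization of non-narrow pivots; the argument is the same.) Therefore each new arc contributes exactly one genuinely new vertex of $P$, namely the intersection of the appropriate arm of the $\omega$-wedge with the new supporting circle, and $P_{uv}$ is the union of these vertices together with the chords joining consecutive ones — at most two connected chains, exactly as stated in the paragraph preceding the lemma. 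Concretely: at the first arc we get both contact vertices from $W_r(u)\cap C$; at each subsequent arc we carry the one shared vertex forward and compute only the one new vertex; at the last arc the wedge direction is $\dirl{v}$ and we likewise obtain the final contact vertex or vertices.

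For the complexity claims: the traversal makes one pass over the arcs of $\Omega^*_{uv}$, doing $O(1)$ arithmetic per arc (one direction update via the arc's angular measure, one line–circle intersection), so the running time is linear in the number of arcs. The working space is $O(1)$: at any moment we need only the current wedge direction, the current and previous supporting circles, and the last one or two reconstructed vertices; everything else is read from the input in order and the output vertices of $P$ are emitted on the fly, never stored all at once.

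The main obstacle I expect is bookkeeping rather than a deep difficulty: correctly tracking \emph{which} arm (left or right) is the pinned one versus the pivoting one as we cross each pivot, and handling the boundary cases where a pivot of $\Omega^*$ coincides with a vertex of $P$ (a non-strictly-narrow vertex, or a hidden pivot that has already been merged away in $\Omega^*$) so that the "new vertex" of the next arc degenerates to the shared vertex. One must also verify that the orientation conventions make the intersection $W_r(u)\cap C$ pick out the right two of the (generically two) intersection points — this is where convexity of $P$ and the fact that $P$ lies inside every minimal $\omega$-wedge are used to disambiguate. None of this requires new geometry beyond Lemmas~\ref{lemma:ang-measure}, \ref{lemma:normal-pivot}, and \ref{lemma:turns}, but it is the part that needs care.
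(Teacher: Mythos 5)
Your proposal matches the paper's proof in both substance and structure: you start from $W_r(u)$, intersect its arms with the first arc's supporting circle to get the two contact vertices, propagate the wedge direction across each arc by half its angular measure via Lemma~\ref{lemma:ang-measure} (using, as the paper does, that non-strictly-narrow pivots cause no jump in direction), and obtain one new vertex per pivot with $O(1)$ work and space, assembling $P_{uv}$ as at most two chains while avoiding double-reporting. This is the same algorithm and the same correctness argument as in the paper, so the proposal is correct.
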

\begin{proof} 
 Let  $\Gamma$ be the arc of $\Omega^*_{uv}$ incident to $u$, let $u'$ be the other endpoint of $\Gamma$,  and let $C$ be the supporting circle of $\Gamma$.
 See Figure~\ref{fig:j-lous}b.
 By knowing the value of $\omega$ and the direction 
$\dirr{u}$, 
we 
determine the wedge $W_r(u)$. 
The intersection between the wedge $W_r(u)$ and the circle  $C$ 
determines the 
two vertices of $P$ touched by the minimal $\omega$-wedge as its apex traverses arc $\Gamma$.
In Figure~\ref{fig:j-lous}b, these two points are $u$ and $p$.
The direction of the leftmost minimal $\omega$-wedge at $u'$, $W_\ell(u')$, is   $\dirr{u} + \turnValue{u, u'}/2$ due to Lemma~\ref{lemma:ang-measure}.
If $u'$ is inside $\Omega^*_{uv}$, then $u'$ is not a strictly narrow vertex, and thus
there is a unique  minimal $\omega$-wedge $W(u')$ 
at $u'$, $W(u') = W_\ell(u')$. 
Therefore, for each arc of $\Omega^*_{uv}$ we find the pair of vertices of $P$ that induces that arc. 
Moreover,  by visiting  the pivots of $\Omega^*_{uv}$ one by one, we find the vertices of each of the two chains of $P_{uv}$ ordered clockwise. 
To avoid double-reporting vertices of $P$, we keep the startpoints of the two chains, and whenever one chain reaches the startpoint of the other one, we stop reporting the points of the former chain. 
  
This procedure visits the pivots of $\Omega^*_{uv}$ one by one, and performs $O(1)$ operations at each pivot, namely, finding the intersection between a given wedge and a given circle. No additional information needs to be stored.
\end{proof}

\subsubsection{Reconstruction algorithm.} As an input, we are given an angle  $\omega$, $0 < \omega < \pi$, 
and a circular arc sequence $\maxOmega$ which is not a single circle.   
We now describe an algorithm to check if $\maxOmega$ is the $\omega$-cloud of some convex polygon $P$, 
and to return $P$ if this is the case.
It consists of two passes through $\maxOmega$, which are detailed below. 
During the first pass we compute a list $S$ of all strictly narrow vertices of $P$ that are not hidden pivots. With each such vertex $u$, 
we store the supporting lines of the two edges of $P$ incident to $u$. In the second pass we use this list to reconstruct $P$.

\textbf{First pass.} 
We iterate through the pivots of $\maxOmega$ (recall that these are not hidden by the definition of $\maxOmega$). 
For the currently processed pivot $u$, we maintain the point $v$ on $\maxOmega$ such that $\turnValue{v,u} = 2(\pi-\omega)$. 
If pivot $u$ is narrow, we jump to the point on $\maxOmega$ at the distance $2(\pi-\omega)$ from $u$. 
Moreover, if $u$ is strictly narrow, we add  $u$ to the list $S$. 
If $u$ is not narrow, we process the next pivot of $\maxOmega$. We now give the details.

Let $\Gamma$ be the arc of $\maxOmega$ incident to $u$ and following it in clockwise direction. Let $\Gamma_r$ be
the arc following $\Gamma$, and $C_r$ be the supporting circle of $\Gamma_r$. 
We consider several cases depending on the angular measure $|\Gamma|$ of $\Gamma$: 

\begin{itemize}
\item[(a)] 
$|\Gamma| < 2(\pi-\omega)$. See Figure~\ref{fig:first-pass}.
\begin{itemize}
\item[(i)] 
Circle $C_r$ passes through $u$, see Figure~\ref{fig:first-pass}a. Then $u$ is narrow by Lemma~\ref{lemma:narrow-char}b. 
By tracing $\maxOmega$, find the point $w$ on it such that $\turnValue{u,w} = 2(\pi-\omega)$. 
In case $u$ is strictly narrow (i.e., $\angle vuw < \omega$), add $u$ to the list $S$ with the lines through $vu$ and $uw$ (these are the intended lines due to Lemma~\ref{lemma:turns}). 
Set $v$ to be $u$, and $u$ to be $w$ (regardless the later condition). 

\item[(ii)] 
Circle  $C_r$ does not pass through $u$, see Figure~\ref{fig:first-pass}b. Then $u$ is not narrow
by Lemma~\ref{lemma:narrow-char}a. 
Set $u$ to be the other endpoint of $\Gamma$, and update  $v$ accordingly.
\end{itemize} 
\begin{figure}
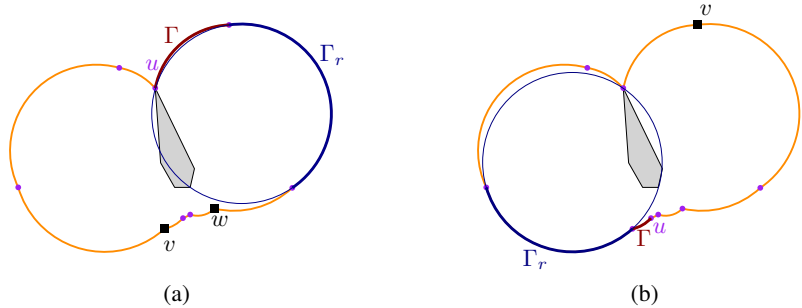

\begin{minipage}{0.49\linewidth}
\centering
    \includegraphics[page=7,scale = 0.9]{narrow-pivot}
    \\
    (a)
\end{minipage}
\hfill
\begin{minipage}{0.49\linewidth}
\centering
    \includegraphics[page=6,scale = 0.9]{narrow-pivot}
    \\
    (b)
\end{minipage}
    \caption{First pass of the $\omega$-aware algorithm: pivot $u$ is (a) strictly narrow, (b) non-narrow}
    \label{fig:first-pass}
\end{figure}
 
\item[(b)] $|\Gamma| = 2(\pi-\omega)$. Then $u$ is narrow by Lemma~\ref{lemma:narrow-char}b. 
Let $w$ be the other endpoint of $\Gamma$. Update $S$, $v$, and $u$ as in item a(i). 

\item[(c)] $|\Gamma| = 2t(\pi-\omega)$ for some integer $t>1$. Then $\Gamma$ is in fact multiple arcs separated by hidden pivots, see Lemma~\ref{lemma:hidden-pivots} and Corollary~\ref{cor:max-measure}.  Let $p$ be the other endpoint of $\Gamma$.
Let $w$ and $w'$ be the points on $\Gamma$ such that $\turnValue{u,w} = 2(\pi-\omega)$ and $\turnValue{w',p} = 2(\pi-\omega)$. 
 Update $S$ as in item a(i). Set $u$ to be $p$, and $v$ to be $w'$.
\item[(d)] Otherwise, stop and report that $\maxOmega$ is not the maximal $\omega$-cloud of any polygon. 
\end{itemize}

Note that before starting the above procedure, 
we need to find the starting positions $v$ and $u$. 
We choose $v$ arbitrarily and traverse $\maxOmega$ until we reach the corresponding position of $u$ (i.e., $\turnValue{v,u} = 2(\pi-\omega)$). 
Since we already have traversed the portion of $\maxOmega$ between $v$ and $u$, 
in order to perform exactly one pass through $\maxOmega$, we will have to finish the above procedure as soon as 
the pointer $u$ has reached the initial position of $v$ (not the initial position of $u$). 
However, this is still enough for creating the complete list $S$. Indeed, 
by Lemma~\ref{lemma:turns}b, there is at most one narrow pivot between $v$ and $u$. 
There is no such narrow pivot if and only if the left arm of the  minimal $\omega$-wedge $W_r(v)$ and the right arm of $W_\ell(u)$ coincide, and in this case
there is nothing to add to $S$. 
If there is such a narrow pivot $w$, then neither $u$ nor $v$ is narrow. 
By Lemma~\ref{lemma:turns}c,  
both the left arm of the wedge $W(v)$ and the right arm of $W(u)$ pass through $w$.  Thus $w$ can be found 
 as their intersection.

\textbf{Second pass.} In case list $S$ is empty, we 
apply the procedure of Lemma~\ref{lemma:reconstr-no-narrow} to the whole $\maxOmega$. 
In particular,  as both the start and the endpoint,  
we take the point $x$ with which we completed the first pass of the algorithm; the point  $x'$ such that $\turnValue{x',x} = 2(\pi-\omega)$ is already known from the first pass.
Then $\dirr{x} = \dir{x}$ is the direction of the  minimal $\omega$-wedge with the apex at $x$ and the right arm passing through $x'$.

Suppose now that list $S$ contains $k$ vertices.  They  subdivide $\maxOmega$ into $k$ connected portions that are free from strictly narrow non-hidden pivots. Each portion is treated as follows: 
\begin{itemize}
\item
 If it is a single maximal arc of measure $2t(\pi-\omega)$, we simply separate it by $t-1$ equidistant points, and those points are exactly the vertices of the portion of $P$ corresponding to the considered component of $\maxOmega$, see Lemma~\ref{lemma:hidden-pivots}.

\item Otherwise it is a portion free from any strictly narrow pivots. We process this portion by the procedure of Lemma~\ref{lemma:reconstr-no-narrow}. 
\end{itemize}

Correctness and complexity of 
this algorithm
are analyzed in the following theorem.

\begin{theorem}
\label{thm:algo-given-w}
Given an angle  $\omega$ such that 
$0 < \omega < \pi$, and a circular arc sequence $\maxOmega$ of $n$ arcs with $n>1$, 
there is an algorithm to check if $\maxOmega$ is the maximal $\omega$-cloud of some 
convex polygon $P$, 
and to return $P$ if this is the case.
 The algorithm works in $O(n)$ time, making two passes through the input, and it 
uses $O(1)$ working space. The constants in big-Oh depend only on $\omega$. 
\end{theorem}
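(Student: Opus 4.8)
The plan is to argue, in order: (i) assuming $\maxOmega$ genuinely is the maximal $\omega$-cloud of a convex polygon $P$, the first pass reports every non-hidden strictly narrow vertex of $P$ together with its two incident edge-lines and leaves a valid starting configuration; (ii) given that information, the second pass reports exactly the vertices of $P$ in clockwise order; (iii) the auxiliary tests built into the two passes reject every circular arc sequence that is \emph{not} such a cloud; and (iv) the linear-time, two-pass, $O(1)$-space bounds hold.

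For (i), the loop invariant is $\turnValue{v,u}=2(\pi-\omega)$. By Corollary~\ref{cor:max-measure} every arc of the underlying $\omega$-cloud has measure at most $2(\pi-\omega)$, so by Lemma~\ref{lemma:concat} every arc of a valid $\maxOmega$ has measure either at most $2(\pi-\omega)$ or exactly $2t(\pi-\omega)$ for an integer $t\ge 2$; hence case~(d) is reached only on inputs that are not a maximal $\omega$-cloud, and cases (a)--(c) are exhaustive on valid inputs. In cases (a)--(c) the narrow/non-narrow classification of $u$ is precisely Lemma~\ref{lemma:narrow-char}: the test ``$C_r$ passes through $u$'' in case (a)(i) is condition (b)(ii) of that lemma, case (b) is its condition (b)(i), case (c) is Lemma~\ref{lemma:hidden-pivots} together with Lemma~\ref{lemma:concat}, and the non-narrow conclusion in case (a)(ii) is Lemma~\ref{lemma:narrow-char}(a). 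When $u$ is found narrow, Lemma~\ref{lemma:turns}(b) certifies that the point $w$ at turn-distance $2(\pi-\omega)$ ahead is the matching pivot and that the recorded lines $vu$ and $uw$ support the two edges of $P$ at $u$, while Observation~\ref{obs:narrow} certifies that after jumping $u$ to $w$ the state again satisfies the invariant needed to continue. The cyclic start-up is handled as in the text: fix $v$ arbitrarily, walk to the first $u$ with $\turnValue{v,u}=2(\pi-\omega)$, and stop when the lead pointer returns to that initial $v$; by Lemma~\ref{lemma:turns}(b) the initial stretch contains at most one narrow pivot, which, if present, is recovered (with its edges) as the intersection of the left arm of $W_r(v)$ with the right arm of $W_\ell(u)$ via Lemma~\ref{lemma:turns}(c).

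For (ii), the reported strictly narrow pivots cut $\maxOmega$ into $k\ge 0$ cyclic portions. A portion that is a single arc of measure $2t(\pi-\omega)$ is split by its hidden pivots into $t$ equal arcs by Lemma~\ref{lemma:hidden-pivots}, so its $t-1$ equidistant interior points and its two endpoints are exactly the corresponding vertices of $P$. A portion containing no strictly narrow pivot has a left endpoint $u$ that is a strictly narrow pivot whose incident edge-lines are known (or recomputed on the fly), and these fix $W_r(u)$ and thus $\dirr{u}$, so Lemma~\ref{lemma:reconstr-no-narrow} reconstructs the chains of $P$ corresponding to that portion in time linear in its number of arcs. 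The case $k=0$ ($S$ empty) is the special case where Lemma~\ref{lemma:reconstr-no-narrow} is applied to all of $\maxOmega$, taking as base point the point $x$ at which the first pass closed, with $\dirr{x}=\dir{x}$ obtained from the already-known point $x'$ with $\turnValue{x',x}=2(\pi-\omega)$. Concatenating the outputs of the portions in cyclic order gives a vertex list, which by Theorem~\ref{thm:uniqueness-max} is the unique convex polygon with maximal $\omega$-cloud $\maxOmega$, if one exists.

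For (iii) and (iv): besides the explicit rejection in case~(d), the algorithm also rejects if a required wedge--circle intersection is empty or degenerate, if a chain produced by Lemma~\ref{lemma:reconstr-no-narrow} fails to close up consistently, or if three consecutively reported vertices make a reflex turn; since by Theorem~\ref{thm:uniqueness-max} a valid polygon is unique and is produced exactly by this procedure, passing all these local tests certifies that $\maxOmega$ is its maximal $\omega$-cloud. For complexity, the lead pointer, the chasing pointer $v$, and the forward ``advance by turn-distance $2(\pi-\omega)$'' probes are all monotone and never revisit an arc already passed by the lead pointer, so each pass is a single linear sweep with $O(1)$ work per arc; the reported strictly narrow vertices and final vertices are streamed to the output rather than stored, so the working space is $O(1)$; and the big-Oh constants absorb the bounded number $O(1/(\pi-\omega))$ of hidden pivots into which a single maximal arc can expand, hence depend only on $\omega$. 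The main obstacle in turning this sketch into a full proof is exactly this amortized pointer analysis together with the cyclic bookkeeping: one must check carefully that the chasing pointer and the forward turn-distance probes never cause any portion of $\maxOmega$ to be traversed a third time, so that the algorithm is honestly two sequential passes in $O(n)$ time and $O(1)$ space, and that the start-up phase and the $k=0$ case fit inside these two passes without an extra traversal.
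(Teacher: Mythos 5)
Your proposal follows the paper's proof essentially step for step: the same case analysis of the first pass justified by Lemmas~\ref{lemma:narrow-char}, \ref{lemma:hidden-pivots} and~\ref{lemma:turns}, the same treatment of the second pass via Lemma~\ref{lemma:reconstr-no-narrow} (including the $k=0$ case and the equidistant subdivision of maximal arcs), and the same per-pivot accounting for the two passes.

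The one place where your argument diverges --- and does not work as stated --- is the $O(1)$ working-space claim. You assert that the reported strictly narrow vertices are ``streamed to the output rather than stored''; but the list $S$ is produced by the first pass and consumed by the second (it delimits the portions of $\maxOmega$ and supplies the edge-lines that determine $\dirr{u}$ at the start of each portion), so it must be retained between the passes and cannot be streamed away. The paper closes this gap differently: by Lemma~\ref{lemma:turns} any two narrow pivots are at turn-distance at least $2(\pi-\omega)$, and by Corollary~\ref{cor:total-measure} the total angular measure of $\maxOmega$ is at most $4\pi$, so $|S| \le \floor{2\pi/(\pi-\omega)}$, a constant for fixed $\omega$. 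This bound is also the real source of the theorem's caveat that the big-Oh constants depend only on $\omega$; your sketch attributes that dependence solely to the number of hidden pivots packed into a single maximal arc, which is one place it enters but not the only one. With this substitution your proof matches the paper's; the remaining points you flag as ``obstacles'' (monotonicity of the chasing pointer and the cyclic start-up) are handled in the paper exactly as you describe them.
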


\begin{proof}
Consider the first pass of the algorithm. 
Cases a(i), b, and c are exactly the cases where $u$ is a narrow pivot, see Lemmas~\ref{lemma:narrow-char} and \ref{lemma:hidden-pivots}. 
The case d corresponds to an impossible situation
by Lemma~\ref{lemma:hidden-pivots} and Corollary~\ref{cor:max-measure}.
In case a(i), we update $u$ to be $w$, thus we are not processing the pivots between $u$ and $w$; Lemma~\ref{lemma:turns} guarantees 
we do not miss any narrow pivot.
Correctness of the second pass is  implied by Lemmas~\ref{lemma:hidden-pivots} and \ref{lemma:turns}a.

Each pivot of $\maxOmega$ is visited exactly once during the first pass (it is either checked as the pivot $u$,  traversed in the case a(i), or skipped in the case c). The second pass as well visits each pivot once. The time spent during each such visit is $O(1 + t)$, where $t$ is the number of hidden pivots on the maximal arc adjacent to the visited pivot. 
The storage required by the algorithm is the storage required for the list $S$, i.e., it is $O(k)$.  

Observe finally that by Lemma~\ref{lemma:turns}, for any pair $u,v$ of narrow pivots of $\maxOmega$, $\turnValue{u,v} \geq 2(\pi-\omega)$. By Corollary~\ref{cor:total-measure} the total angular measure of the arcs of  $\maxOmega$ is at most $4\pi$. Thus the total number of narrow pivots  of $\maxOmega$  (including the hidden pivots) is at most $\floor{2\pi/(\pi-\omega)}$, which is a constant if $\omega$ is fixed. This completes the proof.     
\end{proof}

\begin{remark}
It is not difficult to modify the above algorithm to work for the easier setting, where the input is the proper $\omega$-cloud of $P$ 
instead of its maximal version.
\end{remark}

\subsection{An  \texorpdfstring{$\omega$}{omega}-oblivious reconstruction algorithm}\label{sec:reconstr-without-w}
In this section we assume that $\omega < \pi/2$. Thus there are no hidden pivots in
the $\omega$-cloud of $P$, and therefore the input sequence equals the (not maximal) $\omega$-cloud
 $\Omega$. 
 We now give the reconstruction algorithm, which is the main result of this section.

\begin{theorem}
\label{thm:vertex-only}
Given a circular arc sequence $\Omega$, there is an algorithm that finds the convex 
polygon $P$ such that $\Omega$ is the $\omega$-cloud of $P$ for some angle $\omega$ with $0 < \omega < \pi/2$, if such a polygon exists. 
Otherwise, it reports that such a polygon does not exist.
 The algorithm works in $O(n)$ time, making two passes through the input, and it 
uses $O(1)$ working space. 
\end{theorem}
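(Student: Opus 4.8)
The $\omega$-oblivious setting is harder than the $\omega$-aware one only because $\omega$ is unknown; once we recover $\omega$, we can essentially invoke the machinery of Theorem~\ref{thm:algo-given-w} (with the simplification that there are no hidden pivots). So the first task is to determine $\omega$ from $\Omega$ alone. The key observation is Corollary~\ref{cor:max-measure}: every arc of $\Omega$ has angular measure at most $2(\pi-\omega)$, and — crucially — this bound is tight exactly at the arcs flanked by narrow pivots (by Lemma~\ref{lemma:turns}b, the portion of $\Omega$ between a narrow pivot and its companion point at total angular measure $2(\pi-\omega)$ behaves rigidly). More useful for a single-pass computation: if $\Omega$ has \emph{any} narrow pivot, then the largest arc measure among all arcs is a candidate for $2(\pi-\omega)$, and if $\Omega$ has \emph{no} narrow pivot, then by Corollary~\ref{cor:total-measure} the total angular measure is exactly $4\pi$, and Lemma~\ref{lemma:turns}c pins down $\omega$ via the second-intersection condition of Lemma~\ref{lemma:normal-pivot} at any pivot. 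So the plan is: in a first pass, compute the total angular measure $M$ of $\Omega$ and the maximum single-arc measure $m$; if $M = 4\pi$ we are in the no-narrow-vertex case and recover $\omega$ from the rigid construction at a single pivot (Lemma~\ref{lemma:normal-pivot}: the two arms of $W(u)$ must both pass through the second intersection point $x$ of the two supporting circles at $u$, which for a given pivot determines $\omega$ uniquely), and if $M < 4\pi$ we set $\omega := \pi - m/2$, which by Corollary~\ref{cor:max-measure} and Lemma~\ref{lemma:turns} is the only possibility.

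Having recovered a candidate $\omega$, the second pass is exactly the two-pass $\omega$-aware reconstruction of Theorem~\ref{thm:algo-given-w}, run on $\Omega$ itself (no merging is needed since $\omega < \pi/2$ implies there are no hidden pivots, by the remark just before Lemma~\ref{lemma:hidden-pivots} and the discussion in Section~\ref{subsec:max-cloud}). Concretely: identify the strictly narrow pivots using the local characterization of Lemma~\ref{lemma:narrow-char}b (check whether the supporting circle of the next arc passes through the current pivot), record the incident edge-lines at each, and then apply Lemma~\ref{lemma:reconstr-no-narrow} to each connected portion between consecutive strictly narrow pivots. As in Theorem~\ref{thm:algo-given-w}, correctness follows from Lemmas~\ref{lemma:turns},~\ref{lemma:narrow-char}, and~\ref{lemma:reconstr-no-narrow}, uniqueness of the output from the Corollary of Theorem~\ref{thm:uniqueness} stated just above Proposition~\ref{prop:non-unique}, and the time/space bounds from the fact that each pivot is touched $O(1)$ times and the list $S$ of strictly narrow pivots has size $O(1)$ (bounded by $\lfloor 2\pi/(\pi-\omega) \rfloor \le 4$ since $\omega < \pi/2$). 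Finally, to \emph{verify} that the reconstructed point set $P$ is genuinely a convex polygon whose $\omega$-cloud is $\Omega$ — rather than just running the algorithm optimistically — we check convexity of the output and that the recomputed $\omega$-cloud matches $\Omega$; if either check fails, or if case~(d) of the first pass (an arc measure that is not an integer multiple of $2(\pi-\omega)$, which cannot happen here since there are no hidden pivots, but a measure exceeding $2(\pi-\omega)$) arises, we report that no such polygon exists.

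\textbf{The main obstacle.} The delicate point is recovering $\omega$ in the no-narrow-vertex case ($M = 4\pi$) robustly and in $O(1)$ space: we must argue that the construction at a \emph{single arbitrary} pivot $u$ — take the two supporting circles $C_1, C_2$ of the arcs incident to $u$, let $x$ be their second intersection point, and find the angle $\omega$ for which the minimal $\omega$-wedge at $u$ has both arms through $x$ — yields a value consistent across all pivots, and that this value is the true $\omega$. This follows from Lemma~\ref{lemma:normal-pivot} (both arms of $W(u)$ touch $x$) together with Lemma~\ref{lemma:turns}c applied to the $M=4\pi$ case exactly as in the proof of Theorem~\ref{thm:uniqueness}: the wedge at $u$ is rigidly determined, so $\omega$ is forced. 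One subtlety is that the angle $\angle(\text{arm},\text{arm})$ at $u$ with both arms through $x$ could be read as either $\omega$ or $\pi - \omega$ depending on which side of line $ux$ the polygon lies; the orientation of the $\omega$-cloud (traversed clockwise, with $P$ always on a fixed side of each wedge) disambiguates this, and the constraint $\omega < \pi/2$ discards the spurious root when both are in range. A second, more mundane obstacle is making the whole thing a genuine \emph{two-pass} algorithm rather than three: the $\omega$-recovery pass and the ``first pass'' of the $\omega$-aware routine can be fused, since computing $M$ and $m$ and locating the strictly narrow pivots can all be done in one traversal once one realizes that $m$ is attained precisely at arcs adjacent to narrow pivots (Lemma~\ref{lemma:turns}b) — so we can tentatively set $\omega$ from the first large arc we see and confirm $M=4\pi$ or $M<4\pi$ by the time the pass completes, then do the reconstruction in the second pass. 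I expect the write-up to spend most of its length on the $\omega$-recovery lemma and on bookkeeping the passes; the reconstruction half is a near-verbatim appeal to Theorem~\ref{thm:algo-given-w}.
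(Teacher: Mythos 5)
Your overall architecture (recover $\omega$ in a first pass, then hand off to the $\omega$-aware algorithm) matches the paper's, but the way you recover $\omega$ when the total measure is less than $4\pi$ is wrong. You set $\omega := \pi - m/2$ where $m$ is the maximum measure of a \emph{single} arc, justified by the claim that ``$m$ is attained precisely at arcs adjacent to narrow pivots (Lemma~\ref{lemma:turns}b).'' That misreads the lemma: Lemma~\ref{lemma:turns}b says the \emph{total angular measure of the portion} $\Omega_{xx_r}$ adjacent to a narrow pivot is $2(\pi-\omega)$, and by Lemma~\ref{lemma:narrow-char}a that portion generally consists of \emph{several} arcs (one for each vertex of $P$ swept by the free arm while the other arm pivots on $x$), each of measure strictly less than $2(\pi-\omega)$. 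A single arc attains $2(\pi-\omega)$ only when \emph{both} its endpoints are narrow (Lemma~\ref{lemma:narrow-char}b(i)), which need not happen: for a thin triangle with one strictly narrow apex, no arc has measure $2(\pi-\omega)$, so $\pi - m/2 > \omega$ and your reconstruction proceeds with the wrong angle. The paper instead locates a pivot $u$ and the \emph{maximal run of consecutive arcs whose supporting circles all pass through $u$}, starting at $u$; by Lemma~\ref{lemma:narrow-char}a and Observation~\ref{obs:narrow} this run terminates exactly at the point $w$ with $\turnValue{u,w} = 2(\pi-\omega)$, which is what determines $\omega$. (It also handles separately the degenerate case where every pivot is narrow, i.e.\ $P$ is a segment, which your scheme does not address.)

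A second, smaller error: in the $M=4\pi$ case you recover $\omega$ by finding ``the angle for which the minimal $\omega$-wedge at $u$ has \emph{both} arms through $x$,'' the second intersection of the two supporting circles at $u$. Lemma~\ref{lemma:normal-pivot} says only \emph{one} arm of $W(u)$ touches $x$; if both arms of a wedge with apex $u$ passed through the same point $x$ the wedge would be degenerate, so this construction as stated does not determine $\omega$ at all. The correct route (the paper's) is: the line $ux$ is one arm; extend it to its next intersection $x_r$ (or $x_\ell$) with $\Omega$, and Lemma~\ref{lemma:turns}c in the no-narrow-pivot case gives $\turnValue{u,x_r} = 2(\pi-\omega)$, from which $\omega$ follows. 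Your pass structure, complexity accounting, and the bound $|S| \le \lfloor 2\pi/(\pi-\omega)\rfloor$ are fine, but both $\omega$-recovery mechanisms need to be repaired as above before the theorem is proved.
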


\begin{proof}
If $\Omega$ is a single arc, i.e., it is a circle, we return the negative answer, as a circle cannot be an $\omega$-cloud for $\omega < \pi/2$ by Lemma~\ref{cor:circular-cloud}. Below we assume that $\Omega$ has at least two arcs. 

If the total angular measure of $\Omega$ is $4\pi$, then no pivot is strictly narrow due to Corollary~\ref{cor:total-measure}. Starting from any pivot $u$ of $\Omega$ and applying Lemma~\ref{lemma:normal-pivot}, 
we find the placement of one of the arms of the minimal $\omega$-wedge with the apex $u$. Since no pivot is strictly narrow, Lemma~\ref{lemma:turns}c determines whether it is the right or the left arm. This gives us the angle and the placement of the wedge. We can perform the procedure of Lemma~\ref{lemma:reconstr-no-narrow} to reconstruct the polygon.

If the total  angular measure of $\Omega$ is less than $4\pi$, there must be a strictly narrow pivot. We perform a pass through $\Omega$ to find a 
maximal sequence of arcs whose supporting circles contain 
the same pivot $u$, and such that the sequence starts at $u$. 
If we found such sequence of at least two arcs, the distance between $u$ and the endpoint of this sequence is $2(\pi-\omega)$ due to Lemma~\ref{lemma:narrow-char} and Observation~\ref{obs:narrow}. This determines the value of $\omega$. We run the  $\omega$-aware reconstruction procedure of Section~\ref{sec:reconstr-given-w}. If there is no such sequence of at least two arcs, then by Observation~\ref{obs:narrow}  all pivots must be narrow, but  since $\omega< \pi/2$, the polygon $P$ in that case must be a line segment. Whether $\Omega$ is an $\omega$-cloud of a line segment for some $\omega$ can be checked in one pass through $\Omega$.

We can combine finding the above maximal sequence with counting the total angular measure of $\Omega$ in one pass; this pass requires $O(1)$ working  storage. The claim then follows from the Theorem~\ref{thm:algo-given-w}.  
\end{proof}

\bibliography{OmegaCloud}
\end{document}